\newcommand\arraybslash{\let\\\@arraycr}
\newcolumntype{+}{>{\global\let\currentrowstyle\relax}}
\newcolumntype{^}{>{\currentrowstyle}}
\newcommand\dsum{\displaystyle\sum}
\newlength{\bracewidth}
\def\fudge{\mathchoice{}{}{\mkern.5mu}{\mkern.8mu}}
\def\bbc#1#2{{\rm \mkern#2mu\vbar\mkern-#2mu#1}}
\def\bbb#1{{\rm I\mkern-3.5mu #1}}
\def\bba#1#2{{\rm #1\mkern-#2mu\fudge #1}}
\def\bb#1{{\count4=`#1 \advance\count4by-64 \ifcase\count4\or\bba A{11.5}\or
   \bbb B\or\bbc C{5}\or\bbb D\or\bbb E\or\bbb F \or\bbc G{5}\or\bbb H\or
   \bbb I\or\bbc J{3}\or\bbb K\or\bbb L \or\bbb M\or\bbb N\or\bbc O{5} \or
   \bbb P\or\bbc Q{5}\or\bbb R\or\bbc S{4.2}\or\bba T{10.5}\or\bbc U{5}\or
   \bba V{12}\or\bba W{16.5}\or\bba X{11}\or\bba Y{11.7}\or\bba Z{7.5}\fi}}
\def \R {\bb R}
\def \n {\noindent}
\def \diag {\textrm{diag}}
\newtheorem{theorem}{Theorem}[section]
\newtheorem{remark}{Remark}[section]
\newtheorem{lemma}{Lemma}[section]
\newtheorem{corollary}{Corollary}[section]
\title{Multi-Patch and Multi-Group Epidemic Models: A New Framework}
\author{Derdei Bichara\textsuperscript{1} and Abderrahman Iggidr\textsuperscript{2}\\
\textsuperscript{1} Department of Mathematics \& Center for Computational and
 Applied \\Mathematics, California State University, Fullerton, CA 92831, USA\\
\textsuperscript{2} Inria,  Universit\'e  de  Lorraine,  CNRS.  Institut  Elie  Cartan  de  Lorraine,\\  UMR  7502.
ISGMP Bat.  A, Ile du Saulcy, 57045 Metz Cedex 01, France.
}
\date{}
\begin{document}
\maketitle

\begin{abstract}
We develop a multi-patch and multi-group model that captures the dynamics of an infectious disease when the host is structured into an arbitrary number of groups and interacts into an arbitrary number of patches where the infection takes place.  In this framework, we model host mobility that depends on its epidemiological status, by a Lagrangian approach. This framework is applied to a general SEIRS model and the basic reproduction number $\mathcal R_0$ is derived. The effects of heterogeneity in groups, patches and mobility patterns on $\mathcal R_0$ and disease prevalence are explored. Our results show that for a fixed number of groups, the basic reproduction number increases with respect to the number of patches and the host mobility patterns. Moreover, when the mobility matrix of susceptible individuals is of rank one, the basic reproduction number is explicitly determined and was found to be independent of the latter if the matrix is also stochastic. The cases where mobility matrices are of rank one capture important modeling scenarios. Additionally, we study the global analysis of equilibria for some special cases. Numerical simulations are carried out to showcase the ramifications of mobility pattern matrices on disease prevalence and basic reproduction number.
\end{abstract}

{\bf Mathematics Subject Classification:} 92D25, 92D30

\paragraph{\bf Keywords:}
Multi-Patch, Multi-Group, Mobility, Heterogeneity, Residence Times, Global Stability.

\section{Introduction}
The role of heterogeneity in populations and their mobility have long been recognized as driving forces in the spread of infectious diseases \cite{AndMay91,dushoff1995effects,prothero1977disease,SattenspielSimon88}. Indeed, populations are composed of individuals with different immunological features and hence differ in how they can transmit or acquire an infection at a given time. These differences could result from demographic, host genetic or socio-economic factors \cite{AndMay91}. Populations also move across different geographical landscapes, importing their disease history with them either by infecting or getting infected in the host/visiting location.\\

While the concept of modeling epidemiological heterogeneity within a population goes back to Kermack and McKendrick in modeling the age of infection \cite{KmK1927}, the approach gained prominence with Yorke and Lajmonivich's seminal paper \cite{LajYo76} on the spread of gonorrhea, a sexually transmitted disease. An abundant and varied literature have followed on understanding the effects of  ``superspreaders " which are core groups on the disease dynamics \cite{BlytheCCC89,CCCBusen91,Jacq88,JacSimKoo95,Yorke:1978wu} or related multi-group models \cite{jmb, fall2007epidemiological,MR87c:92046,HuangCookeCCC92,Nold80,rushton1955deterministic,SattenspielSimon88} (and the references therein). Similarly, spatial heterogeneity in epidemiology has been extensively explored in different settings. Continuum models of dispersal have been investigated through diffusion equations \cite{metz2014dynamics} whereas \textit{islands} models have been dealt through metapopulation approach \cite{Arino08,ArinoPortet2015,ArVdd06,iggidr2014dynamics,IggidrSalletTsanou,SalVdd06,7606146}, defined here as continuous models with discrete dispersal.\\

Although the importance and the complete or partial analysis of these two types of heterogeneities have been studied separately in the aforementioned papers, little attention has been given to the simultaneous consideration of groups and spacial heterogeneities. Moreover, previous studies on multi-group rely on differential susceptibility in each group through the WAIFW (Who Acquires Infection From Whom \cite{AndMay91}) matrices which, we argue, are difficult to quantify. Similarly, in metapopulation (Eulerian) settings, the movement of individuals between patches is captured in terms of flux of population, making it nearly impossible to track the life-history of individuals after the interpatch mixing. \\

In this paper, we introduce a general modeling framework that structures populations into an arbitrary number of groups (e.g. demographic, ethnic or socio-economic grouping). These populations, with different health statuses,  spend certain amounts of time in an arbitrary number of locations, or \textit{patches}, where they could get infected or infect others. Each patch is defined by a particular risk of infection tied to environmental conditions of each patch. This approach allows us to track individuals of each group over time and to avoid the use of differential susceptibility of individuals or groups, which is theoretically nice but practically difficult to assess.  The likelihood of infection depends both on the time one spends (in a particular patch) and the risk associated with that patch. Moreover, we incorporate individuals' behavioral decisions through differential residence times. Indeed, individuals of the same group spend different amounts of time in different areas depending on their epidemiological conditions. We also considered two cases of the general framework, that are particularly important from modeling standpoint: when the susceptible and/or infected individuals of different groups have proportional residence times in different patches. That is, when the mobility matrix of susceptible (or infected) individuals, $\mathbb M$ (or $\mathbb P$) is of rank one. In these cases, we obtain explicit expressions of the basic reproduction number in terms of mobility patterns. It turns out that if $\mathbb M$ is of rank one and stochastic, the basic reproduction number is independent of the mobility patterns of susceptible host.\\

 In short, we address how group heterogeneity, or \textit{groupness}, patch heterogeneity, or \textit{patchiness}, mobility patterns and behavior each alter or mitigate disease dynamics. In this sense, our paper is a  direct extension of \cite{BicharaCCC2015,bichara2016dynamics,bichara2015sis,castillo2016perspectives} but also other studies that capture dispersal through Lagrangian approaches -- in which it is possible to track host movement after the interpatch mixing -- \cite{cosner2009effects,iggidr2014dynamics,rodriguez2001models,Ruktanonchai2016} and a recent paper \cite{falcon2016day} that investigates the effects of daily movements in the context of Dengue.\\

The paper is organized as follows. Section \ref{sec:modelderiv} explains the model derivation, states the basic properties and the computation of the basic reproduction number $\mathcal R_0(u,v)$ for $u$ groups and $v$ patches. Section \ref{sec:Heterogeneity} investigates the role of patch and group heterogeneity on the basic reproduction number, and how dispersal patterns alter  $\mathcal R_0(u,v)$ and the disease prevalence. Section \ref{Sec:GAS} is devoted to the existence, uniqueness and stability of equilibria for the considered system under certain conditions. Finally, Section \ref{sec:ConclusionDiscussions} is dedicated to concluding remarks and discussions.

\section{Derivation of the model}
\label{sec:modelderiv}

We consider a population that is structured in an arbitrarliy many $u$ groups interacting in $v$ patches. We consider a typical disease captured by an SEIRS structure. Naturally, $S_i$, $E_i$, $I_i$ and $R_i$ are the susceptible, latent, infectious and recovered individuals of Group $i$ respectively. The population of each group is denoted by $N_i=S_i+E_i+I_i+R_i$, for $i=1,\dots,u$. Individuals of Group $i$ spend on average some time in Patch $j$, $j=1,\dots,v$. The susceptible, latent, infected and recovered populations of group $i$ spend $m_{ij}$, $n_{ij}$, $p_{ij}$ and $q_{ij}$ proportion of times respectively in Patch $j$, for $j=1,\dots,v$. At time $t$, the \textit{effective} population of Patch $j$ is $N_j^{\textrm{eff}}=\sum_{k=1}^u(m_{kj}S_k+n_{kj}E_k+p_{kj}I_k+q_{kj}R_k)$. This \textit{effective} population of Patch $j$ describes the temporal dynamics of the population in Patch $j$ weighted by the mobility patterns of each group and each epidemiological status. Of this patch population,  $\sum_{k=1}^up_{kj}I_k$ are infectious. The proportion of infectious individuals in Patch $j$ is therefore,

$$\frac{\sum_{k=1}^up_{kj}I_k}{\sum_{k=1}^u(m_{kj}S_k+n_{kj}E_k+p_{kj}I_k+q_{kj}R_k)}$$

Susceptible individuals of Group $i$ could be infected in any Patch $j$, $j=1,\dots,v$ while visiting there. Hence, the dynamics of susceptible of Group $i$ is given by:

$$\dot S_i= \Lambda_i-\sum_{j=1}^v\beta_j m_{ij}S_i\frac{\sum_{k=1}^up_{kj}I_k}{\sum_{k=1}^u(m_{kj}S_k+n_{kj}E_k+p_{kj}I_k+q_{kj}R_k)}-\mu_iS_i+\eta_iR_i$$  

where $\Lambda_i$ denotes a constant recruitment of susceptible individuals of Group $i$, $\mu_i$ the natural death rate, $\beta_j$ the risk of infection and $\eta_i$ the immunity loss rate.  The patch specific risk vector $\mathcal B=(\beta_j)_{1\leq j\leq v}$ is treated as constant. However, in Subsection \ref{DDTransmission}, we also considered the case when this risk depends on the \textit{effective population size}.

The latent individuals of Group $i$ are generated through infection of susceptible and decreased by natural death and by becoming infectious at the rate $\nu_i$. Hence the dynamics of latent of Group $i$, for $i=1,\dots,u$, is given by:

$$\dot E_i=\sum_{j=1}^v\beta_j m_{ij}S_i\frac{\sum_{k=1}^up_{kj}I_k}{\sum_{k=1}^u(m_{kj}S_k+n_{kj}E_k+p_{kj}I_k+q_{kj}R_k)}-(\nu_i+\mu_i)E_i$$

The dynamics of infectious individuals of Group $i$ is given by
 
 $$\dot I_i=\nu_iE_i-(\gamma_i+\mu_i)I_i$$

where $\gamma_i$ is the recovery rate of infectious individuals. Finally, the dynamics of recovered individuals of Group $i$ is:

$$\dot R_i=\gamma_iI_i-(\eta_i+\mu_i)R_i$$

The complete dynamics of $u$-groups and $v$-patches SEIRS epidemic model is given by the following system:
\begin{equation} \label{PatchGen}
\left\{\begin{array}{llll}
\dot S_{i}= \Lambda_i-\sum_{j=1}^v\beta_jm_{ij}S_i\dfrac{\sum_{k=1}^up_{kj}I_k}{\sum_{k=1}^u(m_{kj}S_k+n_{kj}E_k+p_{kj}I_k+q_{kj}R_k)}-\mu_iS_i+\eta_iR_i,\\[3mm]
\dot E_i=\sum_{j=1}^v\beta_jm_{ij}S_i\dfrac{\sum_{k=1}^up_{kj}I_k}{\sum_{k=1}^u(m_{kj}S_k+n_{kj}E_k+p_{kj}I_k+q_{kj}R_k)}-(\nu_i+\mu_i)E_i\\
\dot I_i=\nu_iE_i-(\gamma_i+\mu_i+\delta_i)I_i\\
\dot R_i=\gamma_iI_i-(\eta_i+\mu_i)R_i
\end{array}\right.
\end{equation}
The description of parameters in Model (\ref{PatchGen}) is given in Table \ref{tab:Param}. These parameters are composed of three set of parameters: ecological/environmental (number of patches $v$ and their risk $\mathcal B$), epidemiological (Recruitment, death rates, recovery rate, etc) and behavioral (mobility matrices) parameters. A schematic description of the flow is given in Fig \ref{fig:FlowEnv}.

\begin{table}[h!]
  \begin{center}
    \caption{Description of the parameters used in System (\ref{PatchGen}).}
    \label{tab:Param}
    \begin{tabular}{cc}
      \toprule
      Parameters & Description \\
      \midrule
$\Lambda_i$ & Recruitment of the susceptible individuals in Group $i$\\
$\beta_j$ & Instantaneous risk of infection in Patch $j$\\
$\mu_i$ & \textit{Per capita} natural death rate of Group $i$\\
$\nu_{i}$ &  \textit{Per capita} rate at which latent in Group $i$ become infectious \\
$\gamma_i$  & \textit{Per capita} recovery rate of Group $i$\\
$m_{ij}$ &  Proportion of time susceptible individuals of Group $i$ spend in Patch $j$\\
$n_{ij}$ &  Proportion of time latent individuals of Group $i$ spend in Patch $j$\\
$p_{ij}$ &  Proportion of time infectious individuals of Group $i$ spend in Patch $j$\\
$q_{ij}$ &  Proportion of time recovered individuals of Group $i$ spend in Patch $j$\\
$\eta_{i}$  & \textit{Per capita} loss of immunity rate \\
$\delta_i$  & \textit{Per capita} disease induced death rate of Group $i$.\\
      \bottomrule
    \end{tabular}
  \end{center}
\end{table}
\begin{figure}[ht]
\centering
\includegraphics[scale =1]{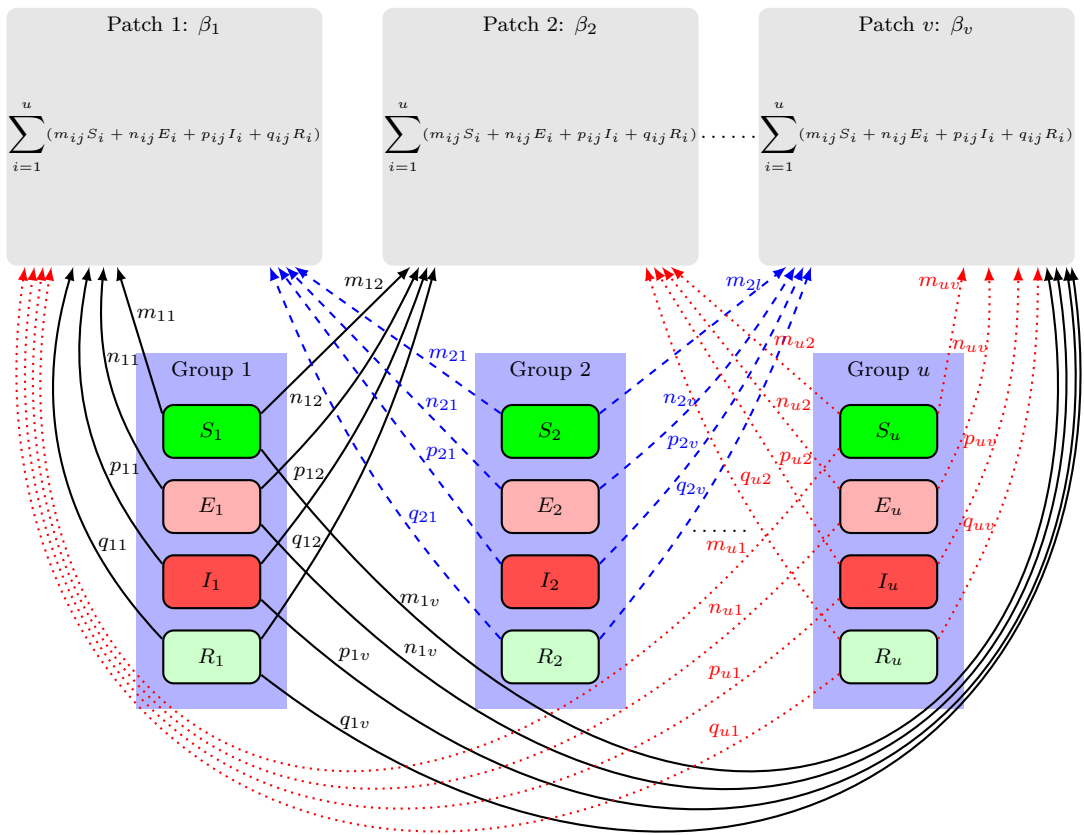}
\caption{Flow diagram of Model \ref{PatchGen}.}
\label{fig:FlowEnv} 
\end{figure}

Model (\ref{PatchGen}) could be written in the compact form,
{\small{
\begin{equation} \label{PatchGenMatrix}
\left\{\begin{array}{llll}
\dot{\mathbf{S}}= {\mathbf{\Lambda}}-\textrm{diag}(\mathbf{S})\mathbb M\textrm{diag}(\mathcal{B})\textrm{diag}^{-1}(\mathbb{M}^{T}\mathbf{S}+  \mathbb{N}^{T}\mathbf{E}+\mathbb{P}^{T}\mathbf{I}+ \mathbb{Q}^{T}\mathbf{R} )\mathbb{P}^{T}\mathbf{I}  -\textrm{diag}(\mu) \mathbf{S}+\textrm{diag}(\eta)\mathbf{R}\\
\dot{\mathbf{E}}= \textrm{diag}(\mathbf{S})\mathbb M\textrm{diag}(\mathcal{B})\textrm{diag}^{-1}(\mathbb{M}^{T}\mathbf{S}+  \mathbb{N}^{T}\mathbf{E}+\mathbb{P}^{T}\mathbf{I}+ \mathbb{Q}^{T}\mathbf{R} )\mathbb{P}^{T}\mathbf{I}         -\textrm{diag}(\nu+\mu)\mathbf{E}\\
\dot{\mathbf{I}}=\textrm{diag}(\nu)\mathbf{E}- \textrm{diag}(\gamma+\mu+\delta)\mathbf{I} \\
\dot{\mathbf{R}}=\textrm{diag}(\gamma)\mathbf{I}- \textrm{diag}(\eta+\mu)\mathbf{R} 
\end{array}\right.
\end{equation}
}}
where $\mathbf{S}=[S_1,S_2,\dots,S_u]^{T}$, $\mathbf{E}=[E_1,E_2,\dots,E_u]^{T}$, $\mathbf{I}=[I_1,I_2,\dots,I_u]^{T}$ and $\mathbf{R}=[R_1,R_2,\dots,R_u]^{T}$. The matrices $\mathbb{M}=(m_{ij})_{\substack{
      1 \leq i \leq u, \\
      1 \leq j \leq v}}$, $\mathbb N=(n_{ij})_{\substack{
      1 \leq i \leq u, \\
      1 \leq j \leq v}}$, $\mathbb P=(p_{ij})_{\substack{
      1 \leq i \leq u, \\
      1 \leq j \leq v}}$ and $\mathbb Q=(q_{ij})_{\substack{
      1 \leq i \leq u, \\
      1 \leq j \leq v}}$ represent the residence time matrices of susceptible, latent, infectious and recovered individuals respectively.  Moreover, ${\mathbf{\Lambda}}=[\Lambda_1,\Lambda_2,\dots,\Lambda_u]^{T}$, $\mathcal B=[\beta_1,\beta_2,\dots,\beta_v]^{T}$,  $\mu =[\mu_1,\mu _2,\dots,\mu _u]^{T}$, $\nu =[\nu_1,\nu _2,\dots,\nu _u]^{T}$, $\gamma =[\gamma_1,\gamma _2,\dots,\gamma _u]^{T}$, $\delta =[\delta_1,\delta _2,\dots,\delta _u]^{T}$  and $\eta =[\eta_1,\eta _2,\dots,\eta _u]^{T}$.\\  
      
\n Model (\ref{PatchGenMatrix}) brings added value to the existing literature in the following ways:
\begin{enumerate}
\item The structure of the host population is different and independent from the patches where the infection takes place. Indeed, in the previous epidemic models describing human dispersal or mixing (Eulerian or Lagrangian), hosts' structure unit and the geographical landscape unit, be it group or patch, is the same and  homogeneous in term of \textit{transmission rate}. Our model captures added heterogeneity in the sense that we decouple the structure of the host to that of patches. For instance, our framework fits well for nosocomial diseases (hospital-acquired infections), where the  hospitals  could  be treated as patches and host's groups as gender or age (see \cite{eckenrode2014association,kaplan2002hospitalized} for the effects of gender and age on nosocomial infections).
\item In our formulation, there is no need to measure contacts rates, a difficult task for nearly all diseases that are not either sexually transmitted or vector-borne. Each patch is defined by its specific risk of infection that could be tied to environmental or hygienic conditions. Hence, susceptibility is not individual-based nor group-based as in classical formulation of multi-group models (the contact matrices in these type of models are known as WAIFW, i.e., Who Acquires Infection From Whom \cite{AndMay91}), but a patch specific risk. In fact, our framework is capable of capturing a wide-range of modeling scenarios, including group-susceptibility.  Indeed, if $g_i$ is the risk of infection of Group $i$, \mbox{$i=1,2,\dots,u$}, it suffices to replace $S_i$ by $g_iS_i$ in only the infection terms in (\ref{PatchGen}). That is, the dynamics of susceptible and latent hosts, for $i=1,2,\dots,u$ will be:
$$\dot S_{i}= \Lambda_i-\sum_{j=1}^v\beta_jm_{ij}g_iS_i\dfrac{\sum_{k=1}^up_{kj}I_k}{\sum_{k=1}^u(m_{kj}S_k+n_{kj}E_k+p_{kj}I_k+q_{kj}R_k)}-\mu_iS_i+\eta_iR_i,$$
and 
$$\dot E_i=\sum_{j=1}^v\beta_jm_{ij}g_iS_i\dfrac{\sum_{k=1}^up_{kj}I_k}{\sum_{k=1}^u(m_{kj}S_k+n_{kj}E_k+p_{kj}I_k+q_{kj}R_k)}-(\nu_i+\mu_i)E_i.$$
For the sake of simplicity, we considered the case where all host groups have the same risk of infection, though all the results obtained in this paper hold without this simplification.

The risk in each patch may be fixed, as in Model (\ref{PatchGenMatrix}), or variable and dependent of the \textit{effective} patch population (See Subsection \ref{DDTransmission}).  The prospect of infection is tied to the environmental risk and time spent in that environment. This fits, for example, pandemic influenza in schools and, again, the nosocomial infections (length of stay in hospitals and their corresponding risks). 
These residences times and patch related risks are easier to quantify than \textit{contact rates}. This paper extend earlier results in \cite{BicharaCCC2015,bichara2015sis}.
\item The model allows individuals of different groups to move across patches without losing their identities. This approach allows a more targeted control strategy for public health benefit. Therefore, the model follows a Lagrangian approach and generalize \cite{BicharaCCC2015,bichara2015sis,cosner2009effects,iggidr2014dynamics,rodriguez2001models,Ruktanonchai2016}.

\item There are different mobility patterns depending on the epidemiological class of individuals. This allows us to highlight and assess the effects of hosts' behavior through social distancing and their predilection for specific patches on the disease dynamics. Although the differential mobility have been considered in an Eulerian setting \cite{SalVdd06,xiao2014transmission}, its incorporation in a Lagrangian setting is new and is an extension of \cite{BicharaCCC2015,bichara2015sis,cosner2009effects,falcon2016day,iggidr2014dynamics,rodriguez2001models,Ruktanonchai2016} (for which mobility is independent of hosts' epidemiological class).
\item In this framework, we consider only patches where the infection takes place (hospitals, schools, malls, etc) whereas previous models suppose that the patches are distributed over the whole space. In short, the mobility matrices are not assumed to be stochastic.In this case, a natural condition on the mobility matrices arises: $\mathbb X\mathbf{1}\leq\mathbf{1}$, for $\mathbb X\in\{\mathbb M, \mathbb N, \mathbb P, \mathbb Q\}$, where $\mathbf{1}$ is the vector whose components are all equal to unity. These conditions stem from the fact that the added proportion of time spend in all patches cannot be more that $100\%$. However, as pointed out by a reviewer, the stochasticity of the mobility matrices is not really restrictive. Indeed, as we are considering an arbitrary number of patches, we can, without loss of generality, add an additional patch within which individuals spent ``the rest of their time" and  where no infection takes place in it. That is, $\beta_{v+1}=0$.
\end{enumerate} 

We denote by $\mathbf{N}$ the vector of populations of each group. The dynamics of the population in each group is given by the following: 

$$\dot{\mathbf{N}}=\mathbf{\Lambda}-\mu\circ \mathbf{N}-\delta\circ\mathbf{I}\leq \mathbf{\Lambda}-\mu\circ \mathbf{N}$$

where $\circ$ denotes the Hadamard product. Thus,  the set defined by 
\[\Omega=\left \{ (\mathbf{S}, \mathbf{E}, \mathbf{I},\mathbf{R})\in\R^{4u}_+\; \mbox{ \LARGE $\mid$  } \;\mathbf{S}+\mathbf{E}+\mathbf{I}+\mathbf{R}\leq\mathbf{\Lambda}\circ\dfrac{1}{\mu}\right \} \] 
is a compact attracting  positively invariant for System (\ref{PatchGenMatrix}). \\

\n The disease-free equilibrium (DFE) of System (\ref{PatchGenMatrix}) is given by $(\mathbf{S}^\ast, \mathbf{0}, \mathbf{0},\mathbf{0})$ where $\mathbf{S}^\ast=\mathbf{\Lambda}\circ\dfrac{1}{\mu}$.

\begin{remark}
If the susceptible or infected individuals do not go to the patches where the infection takes place, either due to  intervention strategy or social distancing, that is when the residence time matrices $\mathbb M$ or $\mathbb P$ are the null matrix (the susceptible individuals do not spend any time in the considered patches), the disease does not spread and eventually dies out. 

\end{remark}
We compute the basic reproduction number following   \cite{MR1057044,VddWat02}. By decomposing the infected compartments of (\ref{PatchGenMatrix}) as a sum of new infection terms and transition terms,

\begin{eqnarray} \label{R01}
\left(\begin{array}{c}
\dot{\mathbf{E}}\\
\dot{\mathbf{I}}
\end{array}\right)&=&\mathcal F(\mathbf{E},\mathbf{I})+\mathcal V(\mathbf{E},\mathbf{I})\nonumber\\
&=&\left(\begin{array}{c}
\textrm{diag}(\mathbf{S})\mathbb M\textrm{diag}(\mathcal{B})\textrm{diag}^{-1}(\mathbb{M}^{T}\mathbf{S}+  \mathbb{N}^{T}\mathbf{E}+\mathbb{P}^{T}\mathbf{I}+ \mathbb{Q}^{T}\mathbf{R} )\mathbb{P}^{T}\mathbf{I}\\
0
\end{array}\right)\nonumber\\
& &+
\left(\begin{array}{c}
       -\textrm{diag}(\nu+\mu)\mathbf{E}\\
\textrm{diag}(\nu)\mathbf{E}- \textrm{diag}(\gamma+\mu+\delta)\mathbf{I}
\end{array}\right)\nonumber
\end{eqnarray}

The Jacobian matrix at the DFE of $\mathcal F(\mathbf{E},\mathbf{I})$ and $\mathcal V(\mathbf{E},\mathbf{I})$ are given by:

$$F=D\mathcal F(\mathbf{E},\mathbf{I})\Bigg|_\textrm{DFE}=\left(\begin{array}{cccc}
\textbf{0}_{u,u} & \textrm{diag}(\mathbf{S^\ast})\mathbb M\textrm{diag}(\mathcal{B})\textrm{diag}^{-1}(\mathbb{M}^{T}\mathbf{S^\ast} )\mathbb{P}^{T}\\
\textbf{0}_{u,u}& \textbf{0}_{u,u}
\end{array}\right)$$
and,
$$V=D\mathcal V(\mathbf{E},\mathbf{I})\Bigg|_\textrm{DFE}=\left(\begin{array}{cccc}
-\textrm{diag}( \mu+\nu)& \textbf{0}_{u,u}\\ 
\textrm{diag}(\nu)  & -\textrm{diag}(\mu+\gamma+\delta) 
\end{array}\right)$$

Hence, we obtain $$-V^{-1}=\left(\begin{array}{cccc}
\textrm{diag}^{-1}( \mu+\nu)& \textbf{0}_{u,u}\\ 
\textrm{diag}(\nu) \textrm{diag}^{-1}((\mu+\nu)\circ (\mu+\gamma+\delta)) & \textrm{diag}^{-1}(\mu+\gamma+\delta)
\end{array}\right)$$
The basic reproduction number is the spectral radius of the next generation matrix

$$
-FV^{-1}=\left(\begin{array}{cccc}
Z \textrm{diag}(\nu) \textrm{diag}^{-1}((\mu+\nu)\circ (\mu+\gamma+\delta))& Z\textrm{diag}^{-1}(\mu+\gamma+\delta)\\
\textbf{0}_{u,u}& \textbf{0}_{u,u}
\end{array}\right)
$$
where 
$$Z=\textrm{diag}(\mathbf{S^\ast})\mathbb M\textrm{diag}(\mathcal{B})\textrm{diag}^{-1}(\mathbb{M}^{T}\mathbf{S^\ast} )\mathbb{P}^{T}$$
 Finally, the basic reproduction number for $u$ groups and $v$ patches is given by
 
 $$\mathcal R_0(u,v)=\rho(Z \textrm{diag}(\nu) \textrm{diag}^{-1}((\mu+\nu)\circ (\mu+\gamma+\delta)))$$

The disease-free equilibrium is asymptotically stable whenever $\mathcal R_0(u,v)<1$ and unstable if $\mathcal R_0(u,v)>1$ \cite{MR1057044,VddWat02}.
\section{Effects of heterogeneity on the basic reproduction number}\label{sec:Heterogeneity}
In this section, we investigate the effects of \textit{patchiness}, \textit{groupness} and mobility on the basic reproduction number. More particularly, how the basic reproduction number changes its monotonicity with respect to the number of patches, groups and mobility patterns of individuals.\\

The following theorem gives the monotonicity of the basic reproduction with respect the residence times patterns of the infected individuals. 

\begin{theorem}\label{TheoremInegaR0P}\hfill

The basic reproduction number $\mathcal R_0(u,v)$ is a nondecreasing function with respect to $\mathbb P$, that is, the infected individuals movement patterns. 
\end{theorem}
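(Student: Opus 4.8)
The plan is to exploit the fact that $\mathbb P$ enters the next generation matrix linearly and through a single nonnegative factor, after which monotonicity of the Perron root does the rest. First I would rewrite the next generation matrix as
$$\mathcal N(\mathbb P)=A\,\mathbb P^{T}B,$$
where
$$A=\textrm{diag}(\mathbf S^\ast)\mathbb M\textrm{diag}(\mathcal B)\textrm{diag}^{-1}(\mathbb M^{T}\mathbf S^\ast)\quad\text{and}\quad B=\textrm{diag}(\nu)\textrm{diag}^{-1}((\mu+\nu)\circ(\mu+\gamma+\delta)).$$
The crucial observation is that neither $A$ nor $B$ depends on $\mathbb P$: the normalizing denominator $\textrm{diag}^{-1}(\mathbb M^{T}\mathbf S^\ast)$ is evaluated at the DFE, where $\mathbf E=\mathbf I=\mathbf R=\mathbf 0$, so the effective population reduces to $\mathbb M^{T}\mathbf S^\ast$ and involves $\mathbb M$ only. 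Thus $\mathcal N(\mathbb P)$ depends on $\mathbb P$ only through the single linear factor $\mathbb P^{T}$.

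Next I would verify that $A$ and $B$ are entrywise nonnegative. This is immediate: $\mathbf S^\ast=\mathbf\Lambda\circ(1/\mu)\ge \mathbf 0$, $\mathcal B\ge \mathbf 0$, the mobility matrix $\mathbb M\ge \mathbf 0$ (proportions of time), and the epidemiological rates $\nu,\mu,\gamma,\delta\ge \mathbf 0$ make the remaining diagonal factors nonnegative; the two $\textrm{diag}^{-1}(\cdot)$ terms are well defined and positive under the standing assumption that each patch carries a positive effective susceptible population, i.e. $\mathbb M^{T}\mathbf S^\ast>\mathbf 0$. Consequently $\mathcal N(\mathbb P)$ is a nonnegative matrix, and since $\mathbb P\mapsto\mathbb P^{T}$ is entrywise monotone while $A,B\ge \mathbf 0$, we obtain
$$\mathbb P_1\le\mathbb P_2\ \text{(entrywise)}\ \Longrightarrow\ \mathbf 0\le\mathcal N(\mathbb P_1)\le\mathcal N(\mathbb P_2)\ \text{(entrywise)}.$$

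Finally I would invoke the monotonicity of the spectral radius for nonnegative matrices: if $0\le X\le Y$ entrywise then $\rho(X)\le\rho(Y)$. This is a standard consequence of Perron--Frobenius theory, provable either via the Collatz--Wielandt characterization of the Perron root or via Gelfand's formula $\rho(X)=\lim_{k}\|X^{k}\|^{1/k}$ combined with the entrywise bound $X^{k}\le Y^{k}$ and a monotone submultiplicative norm. Applying this with $X=\mathcal N(\mathbb P_1)$ and $Y=\mathcal N(\mathbb P_2)$ yields $\mathcal R_0(\mathbb P_1)\le\mathcal R_0(\mathbb P_2)$, which is the asserted monotonicity.

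The argument carries no serious obstacle once the factorization is established; the only step requiring genuine care is the first one, namely confirming that the effective-population denominator $\textrm{diag}^{-1}(\mathbb M^{T}\mathbf S^\ast)$ is independent of $\mathbb P$. Were that denominator to retain $\mathbb P$-dependence at equilibrium, $\mathcal N$ would cease to be entrywise monotone in $\mathbb P$ and the comparison would break down. It is precisely the vanishing of $\mathbf I$ at the DFE that linearizes the dependence on $\mathbb P$ and makes the Perron-root monotonicity applicable.
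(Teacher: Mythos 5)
Your proposal is correct and follows essentially the same route as the paper: both exploit the fact that the next generation matrix is nonnegative and depends on $\mathbb P$ linearly (so entrywise monotonically), and then apply the Perron--Frobenius monotonicity of the spectral radius. Your write-up merely makes explicit the factorization $A\,\mathbb P^{T}B$ and the independence of the DFE denominator from $\mathbb P$, details the paper leaves implicit.
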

\begin{proof}\hfill

\n Recall that $\mathcal R_0(u,v)=\rho(Z \textrm{diag}(\nu) \textrm{diag}^{-1}((\mu+\nu)\circ (\mu+\gamma+\delta)))$ where,\\  $Z=\textrm{diag}(\mathbf{S^\ast})\mathbb M\textrm{diag}(\mathcal{B})\textrm{diag}^{-1}(\mathbb{M}^{T}\mathbf{S^\ast} )\mathbb{P}^{T}$. The matrix $Z$ is linear in $\mathbb P$ and  has all non-negative entries. We consider the  order relation for the matrices as follows: $A\leq B$ if $a_{ij}\leq b_{ij}$, for all $i$ and all $j$, where $a_{ij}$ and $b_{ij}$ are entries of $A$ and $B$ respectively. Also, $A<B$ if $A\leq B$ and $A\neq B$.  Hence, since the Perron-Frobenius theorem \cite{0815.15016} (Corollary 1.5, page 27) guarantees that for any positives matrices $A$ and $B$ such that $A\geq B\geq0$, then $\rho(A)\geq\rho(B)$, we deduce that, for any matrix $\mathbb{P}'\geq\mathbb{P}$,
\begin{eqnarray}
\mathcal{R}_0(u,v,\mathbb{P})&=&\rho(\textrm{diag}(\mathbf{S^\ast})\mathbb M\textrm{diag}^{-1}(\mathcal{B})\textrm{diag}(\mathbb{M}^{T}\mathbf{S^\ast} )\mathbb{P}^{T}\textrm{diag}(\nu) \textrm{diag}^{-1}((\mu+\nu)\circ (\mu+\gamma+\delta)))\nonumber\\
&\leq&\rho(\textrm{diag}(\mathbf{S^\ast})\mathbb M\textrm{diag}(\mathcal{B})\textrm{diag}^{-1}(\mathbb{M}^{T}\mathbf{S^\ast} )\mathbb{P}'^{T}\textrm{diag}(\nu) \textrm{diag}^{-1}((\mu+\nu)\circ (\mu+\gamma+\delta)))\nonumber\\
&:=&\mathcal{R}_0(u,v,\mathbb{P}')\nonumber
\end{eqnarray}
\end{proof}
The variation in monotonicity of $\mathcal R_0(u,v)$ with respect to the residence times patterns of susceptible individuals, that is $\mathbb M$,  is more complicated and difficult to assess in general and even in some more restrictive particular cases (see Remark \ref{MRank1Independance}). \\

\n Hereafter, we define two bounding quantities tied to the global basic reproduction number:

\begin{eqnarray}
\tilde{\mathcal R}_0^i(u,v)&=&\frac{\nu_i}{(\nu_i+\mu_i)(\gamma_i+\mu_i+\delta_i)}\sum_{j=1}^v\frac{\beta_jm_{ij}S_i^\ast p_{ij}}{\sum_{k=1}^um_{kj}S_k^\ast}\nonumber\\
& = & \frac{\beta_i\nu_i}{(\nu_i+\mu_i)(\gamma_i+\mu_i+\delta_i)}\sum_{j=1}^v\left( \frac{\beta_j}{\beta_i} \right)\frac{m_{ij}S_i^\ast p_{ij}}{\sum_{k=1}^um_{kj}S_k^\ast},\nonumber
\end{eqnarray}

and,
$$\mathcal R_0^{i}=\frac{\nu_i}{(\mu_i+\nu_i)(\mu_i+\gamma_i+\delta_i)}\sum_{k=1}^v\beta_kp_{ik}$$
It is worthwhile noting that $\mathcal R_0^{i}=\mathcal R_0 (1,v)$. That is, $\mathcal R_0^{i}$ is also the basic reproduction number of the global system in presence of one group only, namely the $i^{th}$, spread over $v$ patches. $\mathcal R_0^{i}$ could be seen as a group specific ``reproduction number".\\
The quantity $\tilde{\mathcal R}_0^i(u,v)$ could be heuristically seen as the sum of the average number of cases produced by an infected of group $i$ over all patches, in presence of other groups.

In the following theorem, we explore how the general basic reproduction number  $\mathcal R_0 (u,v)$ is tied to these specific reproduction numbers and whether it increases or decreases when the number of patches and/or groups changes.  An underlying assumption in the following theorem is that when adding patches, the proportion of time spent in the existing patches remain exactly the same.\\

\begin{theorem}\label{R0Bounds}\hfill

We have the following inequalities:
\begin{enumerate}
\item $ \displaystyle \max\left\{\max_{i=1,\dots,u}\tilde{\mathcal R}_0^i(u,v),\min_{i=1,\dots,u} \mathcal R_0^{i}\right\}\leq\mathcal R_0(u,v)\leq \max_{i=1,\dots,u} \mathcal R_0^{i}$
\item $\mathcal R_0(u,v)\geq\mathcal R_0(1,v)\geq\mathcal R_0(1,1).$
\item For a fixed number of groups $u$, $\mathcal R_0(u,v)\geq\mathcal R_0(u,v')$ where $v$ and $v'$ are integers such that $v\geq v'$. \label{thm:Ineq3}
\end{enumerate}
\end{theorem}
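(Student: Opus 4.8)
The plan is to recast the entire theorem as a statement about the nonnegative $u\times u$ next-generation matrix
$$M:=Z\,\textrm{diag}(\nu)\,\textrm{diag}^{-1}\big((\mu+\nu)\circ(\mu+\gamma+\delta)\big),\qquad \mathcal R_0(u,v)=\rho(M),$$
by computing its entries explicitly. Writing $c_l=\frac{\nu_l}{(\mu_l+\nu_l)(\mu_l+\gamma_l+\delta_l)}$, multiplying $Z$ on the right by this diagonal matrix merely rescales columns, so
$$M_{il}=c_l\,S_i^\ast\sum_{j=1}^v\frac{\beta_j\,m_{ij}\,p_{lj}}{\sum_{k=1}^u m_{kj}S_k^\ast}\;\ge\;0.$$
First I would record the two bookkeeping identities that drive everything: the diagonal entries are exactly $M_{ii}=\tilde{\mathcal R}_0^i(u,v)$, and, summing a column over its row index and using $\sum_i S_i^\ast m_{ij}=\sum_k m_{kj}S_k^\ast$, the patch denominator cancels and the $l$-th column sum of $M$ collapses to $c_l\sum_{j}\beta_j p_{lj}=\mathcal R_0^{l}$.

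For Part 1 I would then invoke two classical estimates for a nonnegative square matrix $A$: its spectral radius dominates every diagonal entry (each $a_{ii}$ is the spectral radius of a $1\times1$ principal submatrix, and $\rho$ is monotone on principal submatrices), and $\rho(A)$ lies between the smallest and the largest column sum (apply the standard row-sum bound to $A^{T}$ and use $\rho(A)=\rho(A^{T})$). The first, applied to $M$, gives $\mathcal R_0(u,v)\ge\max_i\tilde{\mathcal R}_0^i(u,v)$; the second gives $\min_l\mathcal R_0^{l}\le\mathcal R_0(u,v)\le\max_l\mathcal R_0^{l}$. Taking the larger of the two lower bounds produces the claimed double inequality.

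Part 3 is where the patch-addition convention is used. Passing from $v'$ to $v\ge v'$ patches keeps the first $v'$ columns of $\mathbb M$ and $\mathbb P$ fixed, hence leaves every existing denominator $\sum_k m_{kj}S_k^\ast$ unchanged, and merely appends the nonnegative terms indexed by the new patches; consequently each $M_{il}$ can only grow, i.e. $M(u,v)\ge M(u,v')\ge 0$ entrywise. The Perron--Frobenius monotonicity already quoted in Theorem \ref{TheoremInegaR0P} ($A\ge B\ge 0\Rightarrow\rho(A)\ge\rho(B)$) then yields $\mathcal R_0(u,v)\ge\mathcal R_0(u,v')$. Part 2 assembles the rest: its second inequality $\mathcal R_0(1,v)\ge\mathcal R_0(1,1)$ is the special case $u=1$ of Part 3, while its first inequality follows from the left-hand bound of Part 1, since for a single group $\mathcal R_0(1,v)=\mathcal R_0^{i}=\tilde{\mathcal R}_0^i(1,v)$ and $\mathcal R_0(u,v)\ge\min_i\mathcal R_0^{i}$, the retained group being taken as the minimizing one.

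The main obstacle I expect is entirely in Part 1: carrying out the expansion of $M_{il}$ cleanly and verifying the column-sum cancellation, and then citing the two spectral-radius estimates in a form valid for a possibly reducible nonnegative matrix, so that I do not silently assume irreducibility or strict positivity of a Perron eigenvector. Part 3 carries only the cosmetic subtlety of stating the patch-addition hypothesis precisely enough that the old denominators are genuinely frozen; once that is granted, entrywise monotonicity and Perron--Frobenius close it immediately.
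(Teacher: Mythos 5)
Your proposal is correct and follows essentially the same route as the paper: identify the diagonal entries of the next-generation matrix with $\tilde{\mathcal R}_0^i(u,v)$ and the column sums with $\mathcal R_0^i$, apply the diagonal-entry and column-sum spectral-radius bounds for Part 1, use entrywise monotonicity plus Perron--Frobenius for Part 3, and deduce Part 2 from the others. If anything you are slightly more careful than the paper, which invokes irreducibility of $\mathbb M\mathbb P^T$ before citing the Frobenius column-sum inequality, whereas you correctly note that these bounds (and the diagonal-entry bound) hold for arbitrary nonnegative matrices; your derivation of $\mathcal R_0(1,v)\geq\mathcal R_0(1,1)$ as the $u=1$ case of Part 3 replaces the paper's one-line direct computation but is equally valid.
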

\begin{proof}\hfill

\noindent \textit{1.} We prove first that $\mathcal R_0(u,v)\geq \displaystyle\max_{i=1,\dots,u}\tilde{\mathcal R}_0^i(u,v)$ and then $\displaystyle \min_{i=1,\dots,n}\mathcal R_0^{i} \leq \mathcal R_0(u,v)\leq \max_{i=1,\dots,n}\mathcal R_0^{i}$.

 Let $e_i$ the $i-$th vector of the canonical basis of $\mathbb R^{4u}$. We have
$$e_i^T\diag(\mathbf{S^\ast})\mathbb M=(m_{i1}S^\ast_i,m_{i2}S^\ast_i,\dots,m_{iv}S^\ast_i)$$
It follows that, 
$$e_i^T\diag(\mathbf{S^\ast})\mathbb M\textrm{diag}(\mathcal{B})=(\beta_1m_{i1}S^\ast_i,\beta_2m_{i2}S^\ast_i,\dots,\beta_vm_{iv}S^\ast_i)$$
We also have
$$\mathbb M^{T}\mathbf{S}^\ast=\left(\begin{array}{c}
  \sum_{k=1}^um_{k1}S_k^\ast\\
 \sum_{k=1}^um_{k2}S_k^\ast\\
 \vdots\\
  \sum_{k=1}^um_{kv}S_k^\ast\
\end{array}\right)
$$
Since $\mathbb P^{T}e_i$ is the $i-$th column of $\mathbb P^{T}$, we obtain:

$$\textrm{diag}^{-1}(\mathbb M^{T}\mathbf{S}^\ast)\mathbb P^{T}e_i=\left(\begin{array}{c}
 \frac{p_{i1}}{ \sum_{k=1}^um_{k1}S_k^\ast}\\
 \frac{p_{i2}}{ \sum_{k=1}^um_{k2}S_k^\ast}\\
 \vdots\\
 \frac{p_{iv}}{  \sum_{k=1}^um_{kv}S_k^\ast}
\end{array}\right)
$$
Hence, the diagonal elements of $\mathbb M\textrm{diag}(\mathcal{B})\textrm{diag}(\mathbb{M}^{T}\mathbf{S^\ast} )^{-1}\mathbb{P}^{T}$ is given by

\begin{eqnarray}
e_i^{T}\diag(\mathbf{S^\ast})\mathbb M\textrm{diag}(\mathcal{B})\textrm{diag}^{-1}(\mathbb{M}^{T}\mathbf{S^\ast} )\mathbb{P}^{T}e_i&=&\frac{\beta_1m_{i1}p_{i1}S^\ast_i}{ \sum_{k=1}^um_{k1}S_k^\ast}+
 \frac{\beta_2m_{i2}p_{i2}S^\ast_i}{ \sum_{k=1}^um_{k2}S_k^\ast}+
 \cdots+
 \frac{\beta_vm_{iv}p_{iv}S^\ast_i}{  \sum_{k=1}^um_{kv}S_k^\ast}\nonumber\\
&=&\sum_{j=1}^v \frac{\beta_jm_{ij}p_{ij}S^\ast_i}{ \sum_{k=1}^um_{kj}S_k^\ast}\nonumber
\end{eqnarray}
This implies that, for all $i=1,\cdots,v$, $\tilde{\mathcal R}_0^i(u,v)$ is a diagonal element of the next generation matrix. Since the spectral radius of a matrix is the greater or equal to its diagonal elements, we can conclude that
$\mathcal R_0(u,v)\geq \tilde{\mathcal R}_0^i$ for all  $i=1,\cdots,u$. This implies that \

\begin{equation}\label{RTildeIneq}
\mathcal R_0(u,v)\geq \displaystyle\max_{i=1,\dots,u}\tilde{\mathcal R}_0^i(u,v)
\end{equation}

It remains to prove that $ \displaystyle \min_{i=1,\dots,u} \mathcal R_0^{i}\leq\mathcal R_0(u,v)\leq \max_{i=1,\dots,u} \mathcal R_0^{i}$. The basic reproduction number is given by $\mathcal R_0(u,v)=\rho(Z \textrm{diag}(\nu) \textrm{diag}^{-1}((\mu+\nu)\circ (\mu+\gamma+\delta)))$ where $$Z=\textrm{diag}(\mathbf{S^\ast})\mathbb M\textrm{diag}(\mathcal{B})\textrm{diag}^{-1}(\mathbb{M}^{T}\mathbf{S^\ast} )\mathbb{P}^{T}$$ It can be shown that the elements of this matrix are the following:

\begin{equation}\label{R0Elements} z_{ij}=\frac{\nu_j}{(\mu_j+\nu_j)(\mu_j+\gamma_j+\delta_j)}\sum_{k=1}^v\frac{\beta_km_{ik}p_{jk}S_i^\ast}{\sum_{l=1}^um_{lk}S_l^\ast} \quad \forall \quad 1\leq i,j\leq u.
\end{equation}
If  $\mathbb M\mathbb P^T$ is irreducible, the matrix $Z \textrm{diag}(\nu) \textrm{diag}^{-1}((\mu+\nu)\circ (\mu+\gamma+\delta)))$ is irreducible, and therefore its spectral radius satisfy
the Frobenius' inequality (\cite{horn1985matrix}, Theorem 8.1.22, page 492):
$$\min_jz_{j}\leq \mathcal R_0(u,v)\leq\max_j z_{j}$$
where $z_{j}=\sum_{i=1}^{u} z_{ij}$ and $z_{ij}$ are given by (\ref{R0Elements}). We have:
\begin{eqnarray}
z_{j} &=&\sum_{i=1}^{u}z_{ij}\nonumber\\
&=&\sum_{i=1}^{u}\frac{\nu_j}{(\mu_j+\nu_j)(\mu_j+\gamma_j+\delta_j)}\sum_{k=1}^v\frac{\beta_km_{ik}p_{jk}S_i^\ast}{\sum_{l=1}^um_{lk}S_l^\ast}\nonumber\\
&=&\frac{\nu_j}{(\mu_j+\nu_j)(\mu_j+\gamma_j+\delta_j)}\sum_{i=1}^{u}\sum_{k=1}^v\frac{\beta_km_{ik}p_{jk}S_i^\ast}{\sum_{l=1}^um_{lk}S_l^\ast}\nonumber\\
&=&\frac{\nu_j}{(\mu_j+\nu_j)(\mu_j+\gamma_j+\delta_j)}\sum_{k=1}^v\sum_{i=1}^{u}\frac{\beta_km_{ik}p_{jk}S_i^\ast}{\sum_{l=1}^um_{lk}S_l^\ast}\nonumber\\
&=&\frac{\nu_j}{(\mu_j+\nu_j)(\mu_j+\gamma_j+\delta_j)}\sum_{k=1}^v\frac{\beta_kp_{jk}}{\sum_{l=1}^um_{lk}S_l^\ast}\sum_{i=1}^{u}m_{ik}S_i^\ast\nonumber\\
&=&\frac{\nu_j}{(\mu_j+\nu_j)(\mu_j+\gamma_j+\delta_j)}\sum_{k=1}^v\beta_kp_{jk}\nonumber\\
&:=& \mathcal R_0^j \nonumber
\end{eqnarray}
 Hence, \begin{equation}\label{RiIneq}
 \min_i  \mathcal R_0^i\leq \mathcal R_0(u,v)\leq\max_i  \mathcal R_0^i 
 \end{equation}
 The relations (\ref{RTildeIneq}) and (\ref{RiIneq}) imply the desired inequality. \\
 
 \noindent \textit{2.} By using the inequality proved in the first part, we have:
 \begin{eqnarray}
 \mathcal R_0(u,v) &\geq&\min_{i=1,\dots,u}  \mathcal R_0^i\nonumber\\
 &:=& \mathcal{R}_0(1,v),\nonumber
  \end{eqnarray} 
  Finally, we have: 
   \begin{eqnarray}
 \mathcal R_0(1,v) &=&\mathcal R_0^{1}\nonumber\\
  &=&\frac{\nu_1}{(\mu_1+\nu_1)(\mu_1+\gamma_1+\delta_1)}\sum_{k=1}^v\beta_kp_{1k}\nonumber\\
  &\geq&\frac{\beta_{1}p_{11}\nu_1}{(\mu_1+\nu_1)(\mu_1+\gamma_1+\delta_1)}\nonumber\\
 &:=& \mathcal{R}_0(1,1)\nonumber
  \end{eqnarray} 
  3. Let $u$ a fixed number of groups. We would like to prove that $\mathcal R_0(u,v)\geq\mathcal R_0(u,v')$ for any $v\geq v'$. Since, $\mathcal R_0(u,v)=\rho(Z \textrm{diag}(\nu) \textrm{diag}^{-1}((\mu+\nu)\circ (\mu+\gamma+\delta)))$ and the number of groups is fixed, the epidemiological parameters remain the same for any number of patches. Hence, it remains to compare $Z_{v}$ and $Z_{v'}$ where $Z$ is the part of the next generation matrix that depends on the number of patches.\\
  For $v$ patches, we have $$Z_{v}^{ij}=\sum_{k=1}^v\displaystyle\frac{\beta_km_{ik}p_{jk}S_i^\ast}{\sum_{l=1}^{u}m_{lk}S_l^\ast}$$
  For $v'$ patches, $$Z_{v'}^{ij}=\sum_{k=1}^{v'}\displaystyle\frac{\beta_km_{ik}p_{jk}S_i^\ast}{\sum_{l=1}^{u}m_{lk}S_l^\ast}$$
  Hence, for $v\geq v'$, we have clearly $Z_{v}^{ij}\geq Z_{v'}^{ij}$. Hence, thanks to Perron-Frobebenius' theorem, we conclude that $\mathcal R_0(u,v)\geq\mathcal R_0(u,v')$.  
\end{proof}
\begin{remark}\hfill
\begin{itemize}
\item The inequality in Item~\ref{thm:Ineq3} of Theorem \ref{R0Bounds}
 is independent of the risk of infection in the additional patches. 
 \item If the residence times network configuration changes due the newly added patches, the increasing property of the basic reproduction number with respect to the number of patches (Item~\ref{thm:Ineq3} of Theorem \ref{R0Bounds}) may not hold. This is an interesting avenue to exploring the monotonicity of $\mathcal{R}_0$ and/or the dynamics of the disease.
 \end{itemize}
\end{remark}
\n We investigate relevant modeling scenarios where the expression of the general basic reproduction number for $u$ patches and $v$ patches, $\mathcal R_0(u,v)$, could be explicitly obtained. In the rest of the paper, we use $\langle x \, \mbox{\Large $\mid$} \, y  \rangle$ to denote the canonical scalar product.\\

\begin{theorem}\label{Mrank1}\hfill\

If the susceptible residence times matrix  $\mathbb M$ is of rank one, an explicit expression of $\mathcal R_0$ is given by
\begin{eqnarray*}
\mathcal R_0(u,v)&=&\left(  \xi^T\mathbf{S}^\ast\right)^{-1}\mathcal B^T\mathbb P^T\emph{\textrm{diag}}^{-1}(\nu)\emph{\textrm{diag}}((\mu+\nu)\circ(\mu+\gamma+\delta))\emph{\textrm{diag}}(\mathbf{S^\ast})\xi
\\
&:=&
\left(  \xi^T\mathbf{S}^\ast\right)^{-1}\Biggl\langle
\mathcal B \; \mbox{ \LARGE $\mid$  } \;  \mathbb P^T\emph{\textrm{diag}}(\nu)\emph{\textrm{diag}}^{-1}((\mu+\nu)\circ(\mu+\gamma+\delta))\emph{\textrm{diag}}(\mathbf{S^\ast})\xi
\Biggr\rangle
\end{eqnarray*}
where $\xi\in\R^u$ is such that $\mathbb M=\xi^Tm$, with $m\in\R^v$. Moreover, if the matrix $\mathbb M$ is stochastic, we have:
$$\mathcal R_0(u,v)=
\left(  \mathbf{1}^{T}\mathbf{S}^\ast\right)^{-1}\Biggl\langle
\mathcal B \; \mbox{ \LARGE $\mid$  } \;  \mathbb P^T\emph{\textrm{diag}}(\nu)\emph{\textrm{diag}}^{-1}((\mu+\nu)\circ(\mu+\gamma+\delta))\mathbf{S^\ast}
\Biggr\rangle
$$

\end{theorem}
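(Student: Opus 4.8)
The plan is to exploit the rank-one structure to collapse the next generation matrix to rank one, whose spectral radius is then its single nonzero eigenvalue, read off as a trace. Write $\mathbb{M}=\xi m^{T}$ with $\xi\in\R^{u}$ and $m\in\R^{v}$ (the outer product meant by $\xi^{T}m$ in the statement). First I would substitute this into $Z$. The two crucial simplifications are that $\mathbb{M}^{T}\mathbf{S}^{\ast}=m\,\xi^{T}\mathbf{S}^{\ast}=(\xi^{T}\mathbf{S}^{\ast})\,m$ is a scalar multiple of $m$, so that $\textrm{diag}^{-1}(\mathbb{M}^{T}\mathbf{S}^{\ast})=(\xi^{T}\mathbf{S}^{\ast})^{-1}\textrm{diag}^{-1}(m)$, and that the scalar risk factor telescopes, $m^{T}\textrm{diag}(\mathcal{B})\textrm{diag}^{-1}(m)=\mathcal{B}^{T}$. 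Carrying these through gives
$$Z=\left(\xi^{T}\mathbf{S}^{\ast}\right)^{-1}\bigl(\textrm{diag}(\mathbf{S}^{\ast})\xi\bigr)\bigl(\mathbb{P}\mathcal{B}\bigr)^{T},$$
an explicit rank-one matrix of the form $a\,w_0^{T}$ with $a=\textrm{diag}(\mathbf{S}^{\ast})\xi$ and $w_0=\mathbb{P}\mathcal{B}$.

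Next I would multiply by the invertible diagonal matrix $D:=\textrm{diag}(\nu)\textrm{diag}^{-1}((\mu+\nu)\circ(\mu+\gamma+\delta))$ appearing in $\mathcal{R}_{0}(u,v)=\rho(ZD)$. Right multiplication by a diagonal matrix preserves the rank-one form, so $ZD=(\xi^{T}\mathbf{S}^{\ast})^{-1}a\,(\mathcal{B}^{T}\mathbb{P}^{T}D)$. The key elementary fact is that any rank-one matrix $a\,w^{T}$ has eigenvalues $w^{T}a$ (once) and $0$; because every factor here is entrywise nonnegative, $w^{T}a\ge 0$ and it equals the spectral radius. Thus
$$\mathcal{R}_{0}(u,v)=\rho(ZD)=\left(\xi^{T}\mathbf{S}^{\ast}\right)^{-1}\mathcal{B}^{T}\mathbb{P}^{T}D\,\textrm{diag}(\mathbf{S}^{\ast})\xi,$$
which is exactly the asserted scalar-product expression $\left(\xi^{T}\mathbf{S}^{\ast}\right)^{-1}\bigl\langle\mathcal{B}\mid \mathbb{P}^{T}\textrm{diag}(\nu)\textrm{diag}^{-1}((\mu+\nu)\circ(\mu+\gamma+\delta))\textrm{diag}(\mathbf{S}^{\ast})\xi\bigr\rangle$.

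For the stochastic refinement I would use $\mathbb{M}\mathbf{1}=\mathbf{1}$. Under the rank-one factorization the $i$-th row sum is $\xi_{i}(m^{T}\mathbf{1})$, so requiring all row sums to equal $1$ forces $\xi_{i}$ to be constant in $i$; rescaling the (non-unique) factorization $\xi\mapsto\lambda\xi$, $m\mapsto m/\lambda$ then lets me take $\xi=\mathbf{1}$. Substituting $\xi=\mathbf{1}$ and using $\textrm{diag}(\mathbf{S}^{\ast})\mathbf{1}=\mathbf{S}^{\ast}$ collapses the previous line to
$$\mathcal{R}_{0}(u,v)=\left(\mathbf{1}^{T}\mathbf{S}^{\ast}\right)^{-1}\bigl\langle\mathcal{B}\mid\mathbb{P}^{T}\textrm{diag}(\nu)\textrm{diag}^{-1}((\mu+\nu)\circ(\mu+\gamma+\delta))\mathbf{S}^{\ast}\bigr\rangle,$$
as claimed. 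I do not anticipate a genuine obstacle: the content is entirely the rank-one collapse together with the trace identity for $\rho(a\,w^{T})$. The points needing care are purely bookkeeping of the diagonal factors—in particular verifying the telescoping $m^{T}\textrm{diag}(\mathcal{B})\textrm{diag}^{-1}(m)=\mathcal{B}^{T}$, and confirming that the unique nonzero eigenvalue of the rank-one next generation matrix is nonnegative, so that it is genuinely the spectral radius rather than merely an eigenvalue.
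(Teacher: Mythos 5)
Your proposal is correct and follows essentially the same route as the paper's own proof: factor $\mathbb M=\xi m^{T}$, use $\mathbb M^{T}\mathbf S^{\ast}=(\xi^{T}\mathbf S^{\ast})m$ to pull the scalar out of $\textrm{diag}^{-1}(\mathbb M^{T}\mathbf S^{\ast})$, telescope the $\textrm{diag}(m)$ factors against $\textrm{diag}(\mathcal B)$, observe that the resulting next generation block is rank one so its spectral radius is the single nonzero eigenvalue $w^{T}a$, and take $\xi=\mathbf 1$ in the stochastic case. Your explicit remark that the nonzero eigenvalue is nonnegative (so it really is the spectral radius) is a small point the paper leaves implicit, but otherwise the two arguments coincide.
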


\begin{proof}\hfill

If the susceptible residence times matrix  $\mathbb M$ is of rank one, it exist a vector $\xi\in\R^u$ and a vector $m\in\R^v$ such that $\mathbb M=\xi m^{T}$. We have the following:
$$\mathbb{M}^{T}\mathbf{S}^\ast=m\xi^{T}\mathbf{S}^\ast=\langle \xi  \; \mid   \; \mathbf{S}^\ast\rangle m$$
Hence, 

$$\diag^{-1}(\mathbb M^{T}\mathbf{S}^\ast)=\diag^{-1}(\langle \xi  \; \mid   \; \mathbf{S}^\ast\rangle m)=\langle \xi  \; \mid   \; \mathbf{S}^\ast\rangle^{-1}\diag^{-1}(m)$$
and

\begin{eqnarray}
Z&=&\textrm{diag}(\mathbf{S^\ast})\mathbb M\textrm{diag}(\mathcal{B})\textrm{diag}^{-1}(\mathbb{M}^{T}\mathbf{S^\ast} )\mathbb{P}^{T}\nonumber\\
&=&\textrm{diag}(\mathbf{S^\ast})\xi m^T\textrm{diag}(\mathcal{B})\langle \xi  \; \mid   \; \mathbf{S}^\ast\rangle^{-1}\diag^{-1}(m)\mathbb{P}^{T}\nonumber\\
&=&\langle \xi  \; \mid   \; \mathbf{S}^\ast\rangle^{-1}\textrm{diag}(\mathbf{S^\ast})\xi m^T\textrm{diag}(\mathcal{B})\diag^{-1}(m)\mathbb{P}^{T}\nonumber\\
&=&\langle \xi  \; \mid   \; \mathbf{S}^\ast\rangle^{-1}\textrm{diag}(\mathbf{S^\ast})\xi m^T\diag^{-1}(m)\textrm{diag}(\mathcal{B})\mathbb{P}^{T}\nonumber\\
&=&\langle \xi  \; \mid   \; \mathbf{S}^\ast\rangle^{-1}\textrm{diag}(\mathbf{S^\ast}) \xi  \mathbf{1}^T  \textrm{diag}(\mathcal{B})\mathbb{P}^{T}\quad\textrm{because}\quad m^T\diag^{-1}(m)=\mathbf{1}^T\nonumber\\
&=&\langle \xi  \; \mid   \; \mathbf{S}^\ast\rangle^{-1}\textrm{diag}(\mathbf{S^\ast})\xi\mathcal{B}^T\mathbb{P}^{T}
\end{eqnarray}
We deduce that the non-zero diagonal block of the next generation matrix could be written as:
$$Z\textrm{diag}(\nu) \textrm{diag}^{-1}((\mu+\nu)\circ (\mu+\gamma+\delta)))= \langle \xi  \; \mid   \; \mathbf{S}^\ast\rangle^{-1}\textrm{diag}(\mathbf{S^\ast})\xi\mathcal{B}^T\mathbb{P}^{T}\textrm{diag}(\nu) \textrm{diag}^{-1}((\mu+\nu)\circ (\mu+\gamma+\delta)))$$
This matrix is clearly of rank 1, since it could be written as $wz^T$ where $w\in\R^u$ and $w\in\R^v$. Hence, its unique non zero eigenvalue is 
$$\mathcal R_0(u,v)=\langle \xi  \; \mid   \; \mathbf{S}^\ast\rangle^{-1}\mathcal{B}^T\mathbb{P}^{T}\textrm{diag}(\nu) \textrm{diag}^{-1}((\mu+\nu)\circ (\mu+\gamma+\delta))) \textrm{diag}(\mathbf{S^\ast})\xi$$
or, equivalently,
$$\mathcal R_0(u,v)=
\left(  \xi^T\mathbf{S}^\ast\right)^{-1}\Biggl\langle
\mathcal B \; \mbox{ \LARGE $\mid$  } \;  \mathbb P^T\textrm{diag}(\nu)\textrm{diag}^{-1}((\mu+\nu)\circ(\mu+\gamma+\delta)) \textrm{diag}(\mathbf{S^\ast})\xi
\Biggr\rangle
$$
Now, if $\mathbb M$ is of rank one and stochastic, that is , $\sum_{j=1}^vm_{ij}=1$, for all $i=1,\dots,u$, it is not difficult to show that $\xi=\mathbf{1}$, where $\mathbf{1}$ is the vector whose components are all equal to unity. This leads to
$$\mathcal R_0(u,v)=
\left(  \mathbf{1}^{T}\mathbf{S}^\ast\right)^{-1}\Biggl\langle
\mathcal B \; \mbox{ \LARGE $\mid$  } \;  \mathbb P^T\textrm{diag}(\nu)\textrm{diag}^{-1}((\mu+\nu)\circ(\mu+\gamma+\delta))\mathbf{S^\ast}
\Biggr\rangle
$$
\end{proof}

\begin{remark}\label{MRank1Independance}\hfill

 If the residence times matrix of susceptible individuals, that is $\mathbb M$, is of rank one and stochastic, the basic reproduction number is independent of $\mathbb M$.
\end{remark}
It is worthwhile noting that there is a special case for which the result of Remark \ref{MRank1Independance} holds even if the matrix $\mathbb M$ is not stochastic
 but only of rank one and sub-stochastic. 
 Indeed, by adding a new patch $v+1$ with $\beta_{v+1}=0$ where the hosts of different groups spend ``the rest of their times", the new mobility matrices will become the matrices $\tilde{\mathbb M}=(\mathbb M,\mathbb M'), \tilde{\mathbb N}=(\mathbb N,\mathbb N'), \tilde{\mathbb P}=(\mathbb P,\mathbb P')$ and $\tilde{\mathbb Q}=(\mathbb Q,\mathbb Q')$, where $\mathbb M', \mathbb N', \mathbb P'$ and $\mathbb Q'$  are column vectors. The new mobility matrices are stochastic and $\mathcal R_0(u,v,\mathbb M,\mathbb P)=\mathcal R_0(u,v+1,\tilde{\mathbb M},\tilde{\mathbb P})$ since $\beta_{v+1}=0$. Hence, if $\mathbb M$ and $\tilde{\mathbb M}$ are of rank one, the basic reproduction number is still independent of $\mathbb M$. In this case,  the matrix $\mathbb M$ could be expressed as  $\mathbf{1}m^{T}$ with $\sum_{j=1}^vm_{j}<1$. 
 Thus, there is a special case when $\mathbb M$ is rank 1, yet sub-stochastic, and the reproduction number does not depend on $\mathbb M$.\\

 From a modeling standpoint, the rank one condition of $\mathbb M$ (i.e., $\mathbb M=\xi m^{T}$ with $\xi\in\R^u$ and $m\in\R^v$) can be interpreted as follows:\\
$\bullet$ The ratio of the proportions of time spent in any given patch by susceptible individuals belonging to two different groups, is identical. Indeed, for any given group $i$, the ratio of the proportion of time spent in any given patch by susceptible individual is:
 \begin{eqnarray} \frac{m_{ij}}{\dsum_{k=1}^vm_{ik}}&=&\frac{\xi_{i}m_j}{\dsum_{k=1}^v\xi_{i}m_k}\nonumber\\
 &=&\dfrac{m_j}{\dsum_{k=1}^vm_k},\nonumber\end{eqnarray} which is independent of $i$.  
Moreover, if $\mathbb M$ is stochastic, we deduce that the susceptible of each group spend the exact proportion of time in any given patch, since $\sum_{k=1}^vm_k=1$.\\
$\bullet$  A straightforward case that stems from the previous point is whenever there is one patch and multiple groups; or when there are multiple patches and one group.\\

Similar remarks hold when the matrix $\mathbb P$ is of rank one, which is dealt in the next theorem. 
\begin{theorem}\label{Prank1}\hfill\

If the infected residence times matrix  $\mathbb P$ is of rank one, an explicit expression of $\mathcal R_0$ is given by
$$\mathcal R_0(u,v)=
\Biggl\langle
\mathbf{S}^\ast\circ\alpha \; \mbox{ \LARGE $\mid$  } \; \emph{\textrm{diag}}(\nu)\emph{\textrm{diag}}^{-1}((\mu+\nu)\circ(\mu+\gamma+\delta)) \mathbb M \emph{\diag}^{-1}(\mathbb M^T\mathbf{S}^\ast)  \mathcal B\circ p
\Biggr\rangle
$$

where $\alpha\in\R^u$ and $p\in\R^v$ are such that $\mathbb P=\alpha p^T$. Moreover, if $\mathbb P$ is stochastic,
\begin{eqnarray*}
\mathcal R_0(u,v)&=& \mathbf{S}^{\ast T}\emph{\textrm{diag}}(\nu)\emph{\textrm{diag}}^{-1}((\mu+\nu)\circ(\mu+\gamma+\delta))\mathbb M\emph{\textrm{diag}}(\mathcal B)\emph{\diag}^{-1}(\mathbb M^T\mathbf{S}^\ast)p
\\
&:=&
\Biggl\langle
\mathbf{S}^\ast \; \mbox{ \LARGE $\mid$  } \; \emph{\textrm{diag}}(\nu)\emph{\textrm{diag}}^{-1}((\mu+\nu)\circ(\mu+\gamma+\delta))\mathbb M \emph{\diag}^{-1}(\mathbb M^T\mathbf{S}^\ast)  \mathcal B\circ p
\Biggr\rangle
\end{eqnarray*}
\end{theorem}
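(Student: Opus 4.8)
The plan is to mirror the proof of Theorem~\ref{Mrank1}, exploiting that a rank-one hypothesis on $\mathbb{P}$ forces the nonzero diagonal block of the next generation matrix to be rank one, so that its spectral radius is simply its trace. I would write $\mathbb{P}=\alpha p^T$ with $\alpha\in\R^u$ and $p\in\R^v$, so that $\mathbb{P}^T=p\,\alpha^T$. Substituting into $Z=\diag(\mathbf{S}^\ast)\mathbb M\,\diag(\mathcal B)\,\diag^{-1}(\mathbb M^T\mathbf{S}^\ast)\mathbb P^T$, I would factor off the rightmost row vector $\alpha^T$ and set $w:=\diag(\mathbf{S}^\ast)\mathbb M\,\diag(\mathcal B)\,\diag^{-1}(\mathbb M^T\mathbf{S}^\ast)\,p\in\R^u$, giving $Z=w\,\alpha^T$. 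Note that, unlike in Theorem~\ref{Mrank1} where the rank-one structure emerged from the collapse $\mathbb M^T\mathbf{S}^\ast=\langle\xi\mid\mathbf{S}^\ast\rangle\,m$, here it is immediate, since $\mathbb P^T$ is itself an outer product.

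Next I would form the full nonzero block $N:=Z\,\diag(\nu)\,\diag^{-1}((\mu+\nu)\circ(\mu+\gamma+\delta))$. Writing $D:=\diag(\nu)\diag^{-1}((\mu+\nu)\circ(\mu+\gamma+\delta))$, multiplying $Z=w\alpha^T$ on the right by the diagonal matrix $D$ yields $N=w\,(\alpha^T D)=w\,z^T$ with $z^T:=\alpha^T D$, so $N$ is again rank one. Its trace is its unique nonzero eigenvalue, and since all entries are nonnegative this eigenvalue is the spectral radius; hence $\mathcal R_0(u,v)=\operatorname{tr}(w z^T)=z^T w$.

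It then remains pure bookkeeping: I would expand $z^T w=\alpha^T D\,\diag(\mathbf{S}^\ast)\mathbb M\,\diag(\mathcal B)\,\diag^{-1}(\mathbb M^T\mathbf{S}^\ast)\,p$ and use that diagonal matrices commute, moving $D$ and $\diag(\mathbf{S}^\ast)$ together and commuting $\diag(\mathcal B)$ past $\diag^{-1}(\mathbb M^T\mathbf{S}^\ast)$. Recognizing $\diag(\mathbf{S}^\ast)\alpha=\mathbf{S}^\ast\circ\alpha$ and $\diag(\mathcal B)p=\mathcal B\circ p$ then reassembles the stated scalar-product form. For the stochastic case, the rows of $\mathbb P=\alpha p^T$ sum to one, i.e.\ $\alpha_i\sum_j p_j=1$ for every $i$, which forces all $\alpha_i$ equal; absorbing this common constant into $p$ lets me take $\alpha=\mathbf{1}$, whence $\mathbf{S}^\ast\circ\alpha=\mathbf{S}^\ast$ and the second formula follows.

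The computation carries no genuine difficulty. The only points demanding care are keeping track of the order of the diagonal factors when transcribing the trace $z^T w$ into the Hadamard and scalar-product notation, and checking that the (non-unique) rank-one decomposition $\mathbb P=\alpha p^T$ may be normalized to $\alpha=\mathbf{1}$ in the stochastic subcase.
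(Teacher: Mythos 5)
Your proposal is correct and follows essentially the same route as the paper: write $\mathbb P^T=p\,\alpha^T$, observe that the nonzero block of the next generation matrix is the outer product $w z^T$ with $w=\diag(\mathbf{S}^\ast)\mathbb M\diag(\mathcal B)\diag^{-1}(\mathbb M^T\mathbf{S}^\ast)p$, take its unique nonzero eigenvalue $z^Tw$, and rearrange via commuting diagonal factors into the Hadamard/scalar-product form. Your justification that stochasticity of $\mathbb P=\alpha p^T$ forces all $\alpha_i$ equal (so one may normalize $\alpha=\mathbf{1}$) is in fact more explicit than the paper's, which simply asserts $\alpha=\mathbf{1}$.
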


\begin{proof}\hfill

If the susceptible residence times matrix  $\mathbb P$ is of rank one, there exists a vector $p\in\R^v$ and $\alpha\in\R^u$ such that $\mathbb P=\alpha p^T$. The next generation matrix is:
$$-FV^{-1}=\textrm{diag}(\mathbf{S^\ast})\mathbb M\textrm{diag}(\mathcal{B})\textrm{diag}^{-1}(\mathbb{M}^{T}\mathbf{S^\ast} )p\alpha^T\diag^{-1}((\mu+\nu)\circ(\mu+\gamma+\delta))$$
which is of rank one since it could be written as $xy^T$ where $x=\textrm{diag}(\mathbf{S^\ast})\mathbb M\textrm{diag}(\mathcal{B})\textrm{diag}^{-1}(\mathbb{M}^{T}\mathbf{S^\ast} )p$ and $y=\diag^{-1}((\mu+\nu)\circ(\mu+\gamma+\delta))\alpha$. Hence, its unique non zero eigenvalue is,
\begin{eqnarray}
\mathcal R_0(u,v)&=&\alpha^T\diag^{-1}((\mu+\nu)\circ(\mu+\gamma+\delta))\textrm{diag}(\mathbf{S^\ast})\mathbb M\textrm{diag}(\mathcal{B})\textrm{diag}^{-1}(\mathbb{M}^{T}\mathbf{S^\ast} )p\nonumber\\
&=&(\alpha\circ\mathbf{S^{\ast})^T}\diag^{-1}((\mu+\nu)\circ(\mu+\gamma+\delta))\mathbb M\textrm{diag}(\mathcal{B})\textrm{diag}^{-1}(\mathbb{M}^{T}\mathbf{S^\ast} )p\nonumber\\
&=&\Biggl\langle
\alpha\circ\mathbf{S}^\ast \; \mbox{ \LARGE $\mid$  } \; \diag(\nu)\diag^{-1}((\mu+\nu)\circ(\mu+\gamma+\delta)) \mathbb M \diag^{-1}(\mathbb M^T\mathbf{S}^\ast) \mathcal B\circ p
\Biggr\rangle\nonumber
\end{eqnarray}
If $\mathbb P$ is stochastic, we can show that $\alpha=\mathbf{1}$ and hence,
$$\mathcal R_0(u,v)=\Biggl\langle
\mathbf{S}^\ast \; \mbox{ \LARGE $\mid$  } \; \diag(\nu)\diag^{-1}((\mu+\nu)\circ(\mu+\gamma+\delta)) \mathbb M \diag^{-1}(\mathbb M^T\mathbf{S}^\ast) \mathcal B\circ p
\Biggr\rangle$$
which is the desired result.
\end{proof}
The condition of rank one of the matrices $\mathbb M$ and $\mathbb P$, when both matrices are stochastic, means that the susceptible and infected individuals of different groups spend the same proportion of time in each and every patch. When the matrices are not stochastic, the rank one condition means that the proportion of times spent by susceptible or infected individuals of different groups in each patch are proportional. That is, there exists $\alpha_j$ such that $m_{ij}=\alpha_j m_i$ for all $1\leq i,j\leq u$.

\section{Simulations}
In this section, we run some numerical simulations for 2 groups and 3 patches in order to highlight the effects of differential residence times and to illustrate the previously obtained theoretical results. To that end, unless otherwise stated, the baseline parameters of the model are chosen as follows:
$$
\beta_{1} = 0.25 \;\textrm{days}^{-1},\; \beta_{2} = 0.15 \;\textrm{days}^{-1},\; \beta_{3} = 0.1 \;\textrm{days}^{-1},\; \frac{1}{\mu_1}=75\times 365\;\textrm{days}, \;\frac{1}{\mu_2}=70\times 365 \;\textrm{days},$$
$$\Lambda_1 = 150,
\Lambda_2 = 100, \; \nu_1=\nu_2=\frac{1}{4} \;\textrm{days}^{-1},\; \frac{1}{\gamma_1}= 7\; \textrm{days},\; \frac{1}{\gamma_2}= 6\;  \textrm{days},
 \eta_1=\eta_2= 0.00137\; \textrm{days}^{-1},$$ $$ \delta_1=\delta_2= 2\times10^{-5}\; \textrm{days}^{-1}
$$
Although the values of $\beta_j$ are chosen throughout this section, for convenience, to be between 0 and 1, they need only to be nonnegative.
We begin by simulating the dynamics of Model \ref{PatchGenMatrix} when the basic reproduction number is below or above unity. Figure \ref{DynamicsI1I2} shows the dynamics of infected individuals of Group 1 (Fig. \ref{fig:I1Dynamics}) and Group 2 (Fig. \ref{fig:I2Dynamics}). The disease persists in both groups when $\mathcal R_0>1$ ( Fig. \ref{fig:I1Dynamics} and Fig. \ref{fig:I2Dynamics}, dotted red and dashed green curves) while it dies out when $\mathcal R_0<1$ ( Fig. \ref{fig:I1Dynamics} and Fig \ref{fig:I2Dynamics}, solid blue and dash-dotted black curves).\\

\begin{figure}
\centering     
\subfigure[Dynamics of $I_1$.]{\label{fig:I1Dynamics}\includegraphics[width=81mm]{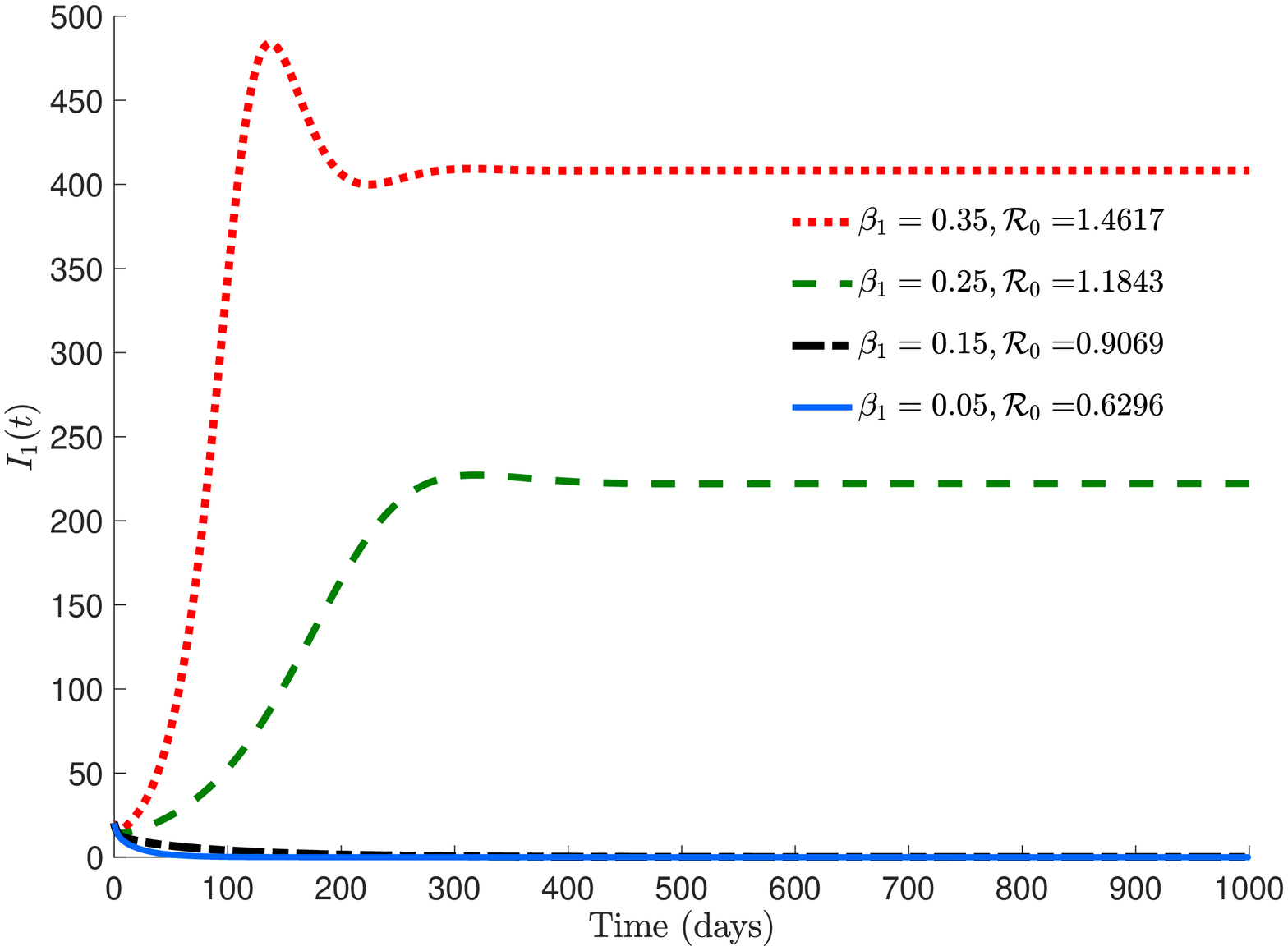}}
\subfigure[Dynamics of $I_{2}$.]{\label{fig:I2Dynamics}\includegraphics[width=81mm]{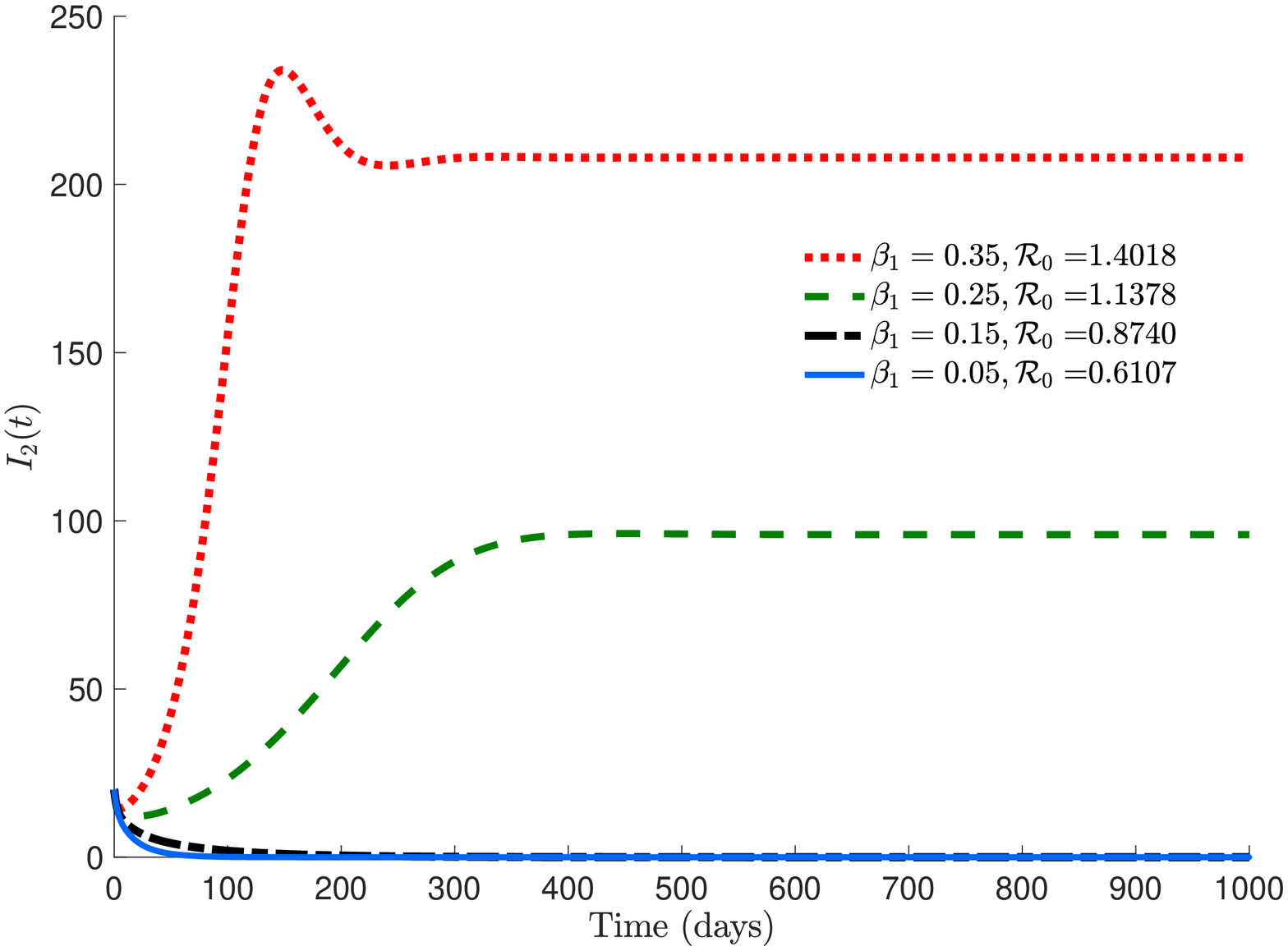}}
\caption{Dynamics of infected individuals of Group 1 (\ref{fig:I1Dynamics}) and Group 2 (\ref{fig:I2Dynamics}). Values of $\beta_{1}=0.35$, $\beta_2=0.25$, $\beta_3=0.15$ and $\mu_1=0.03$, and  $\mu_2=0.04$ are chosen for this set of simulations.}
\label{DynamicsI1I2}
\end{figure}
\begin{figure}
\centering     
\subfigure[$\bar I_1$ vs. $m_{11},\; p_{11}$.]{\label{fig:HeatI1m11vsp11Final}\includegraphics[width=81mm]{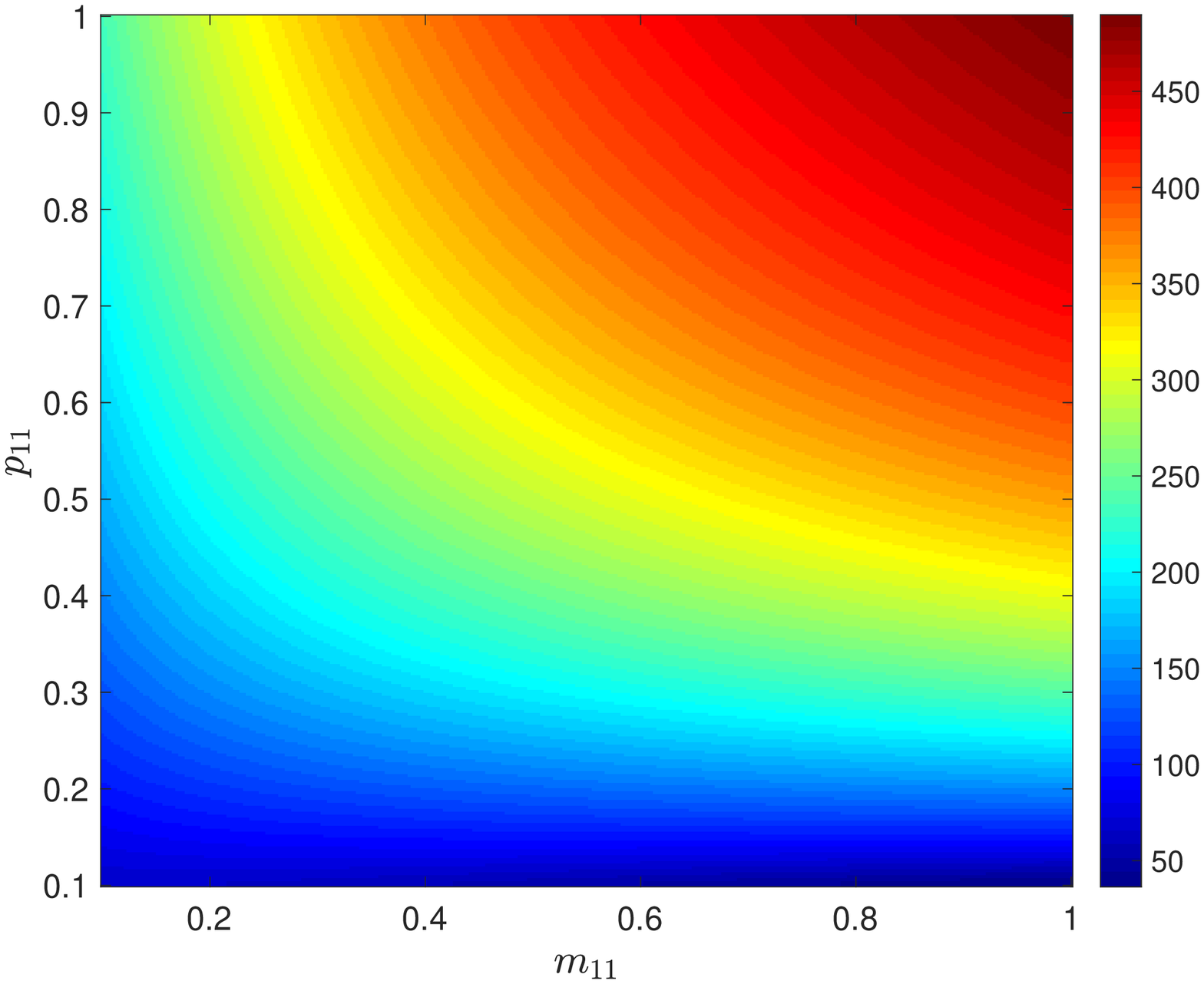}}
\subfigure[$\bar I_{2}$ vs. $m_{11},\; p_{11}$.]{\label{fig:HeatI2m11vsp11Final}\includegraphics[width=81mm]{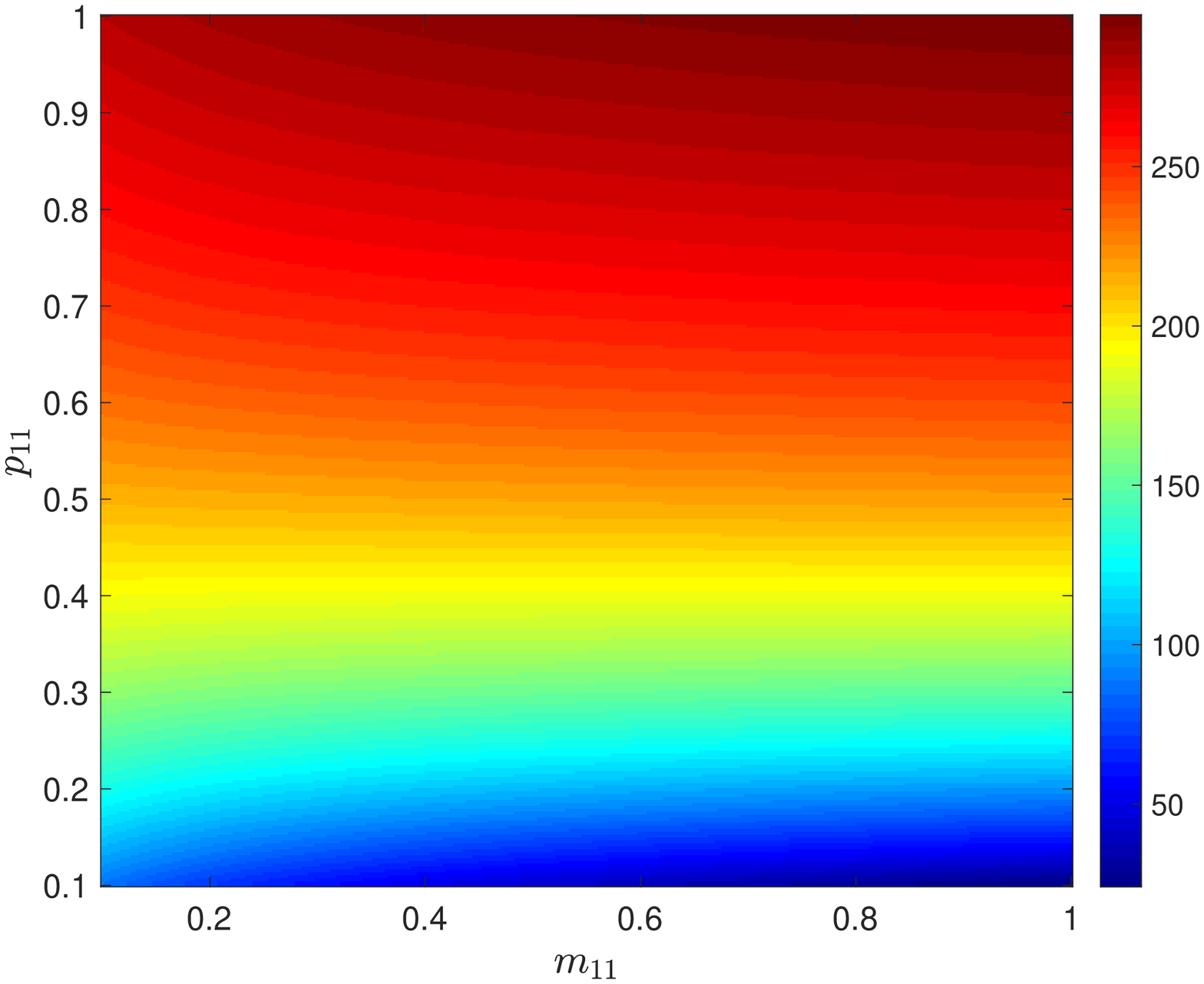}}
\caption{Variation of the disease prevalence at the equilibrium level with respect to the mobility patterns of susceptible and infected individuals of Group 1 (\ref{fig:HeatI1m11vsp11Final}) and Group 2 (\ref{fig:HeatI2m11vsp11Final}) in Patch 1 with $\beta_{1}=0.35$, $\beta_2=0.25$, $\beta_3=0.15$ and $\mu_1=\mu_2=0.05$.}
\label{I1I2vsM11P11}
\end{figure}
Figure (\ref{I1I2vsM11P11}) displays how the equilibrium value of infected individuals of Group 1 and Group 2 change with respect to residence times of infected and susceptible of group 1 in Patch 1, that is $m_{11}$ and $p_{11}$. For instance, in Fig. \ref{fig:HeatI1m11vsp11Final}, the disease burden in Group 1 ($\bar I_1$) is moderately low for all values of $m_{11}$ as long as $p_{11}$, the residence times of Group 1's infected into Patch 1, is below 0.3, even if Patch 1 is the riskiest patch with $\beta_1=0.35$. However, this prevalence level is more marked when $m_{11}\geq0.4$ and $p_{11}\geq0.5$.  The heatmap of $\bar{I_1}$ with respect to $m_{12}$ and $p_{21}$ shows similar patterns. We decided not to display this figure.  Fig \ref{fig:HeatI2m11vsp11Final} shows  the changes in the values of infected in  Group 2 ($\bar I_2$) due to movement patterns of susceptible and infected of Group 1 ($m_{11}$ and $p_{11}$) when their own movement patterns are fixed ($m_{21}=0.6$ and $p_{21}=0.4$).  \\

\begin{figure}
\centering     
\subfigure[$\mathcal R_0$ vs. $m_{11},\; p_{11}$.]{\label{fig:HeatR0m11vsp11}\includegraphics[width=81mm]{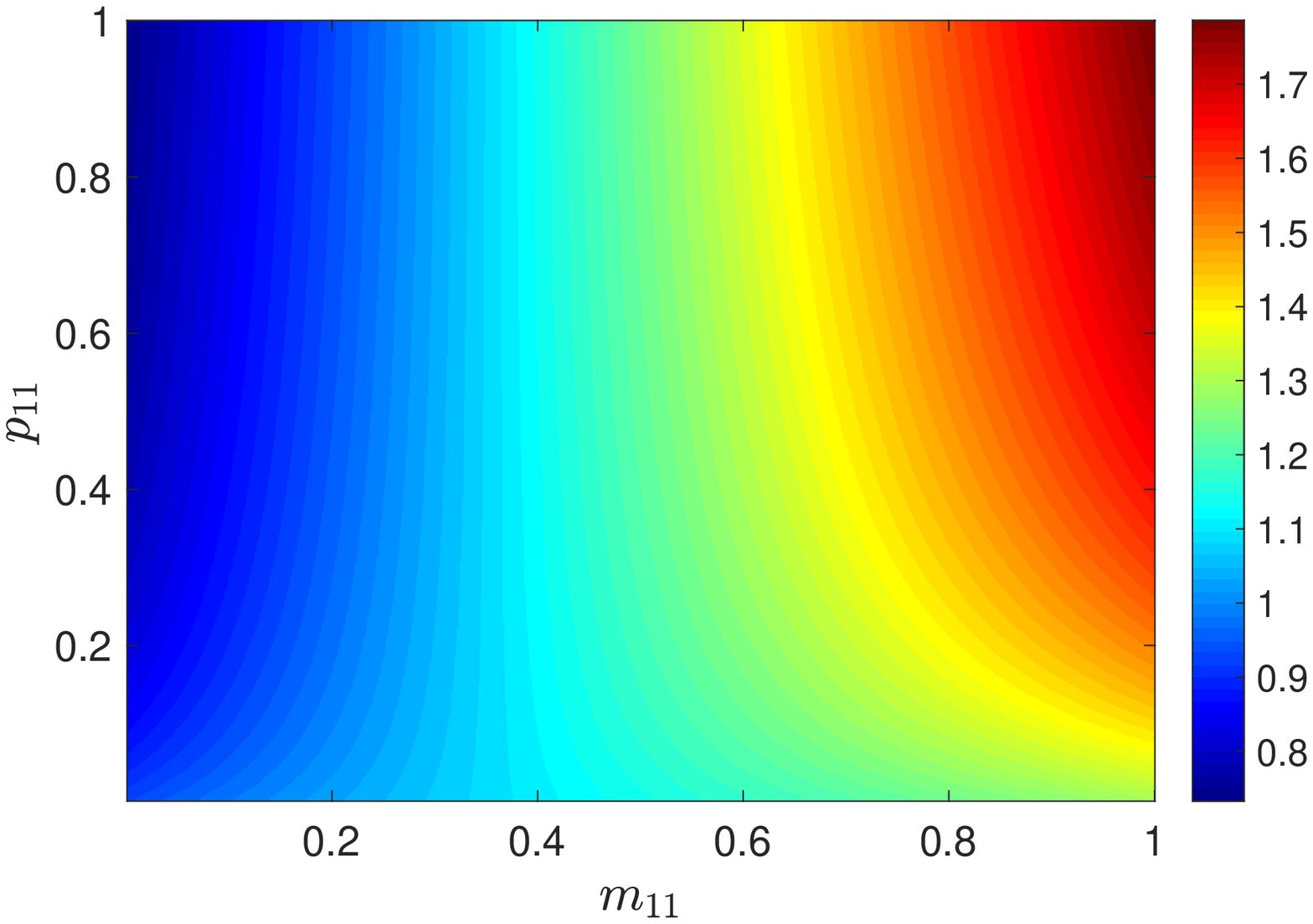}}
\subfigure[$\mathcal R_0$ vs. $m_{12},\; p_{12}$.]{\label{fig:HeatR0m12vsp12}\includegraphics[width=81mm]{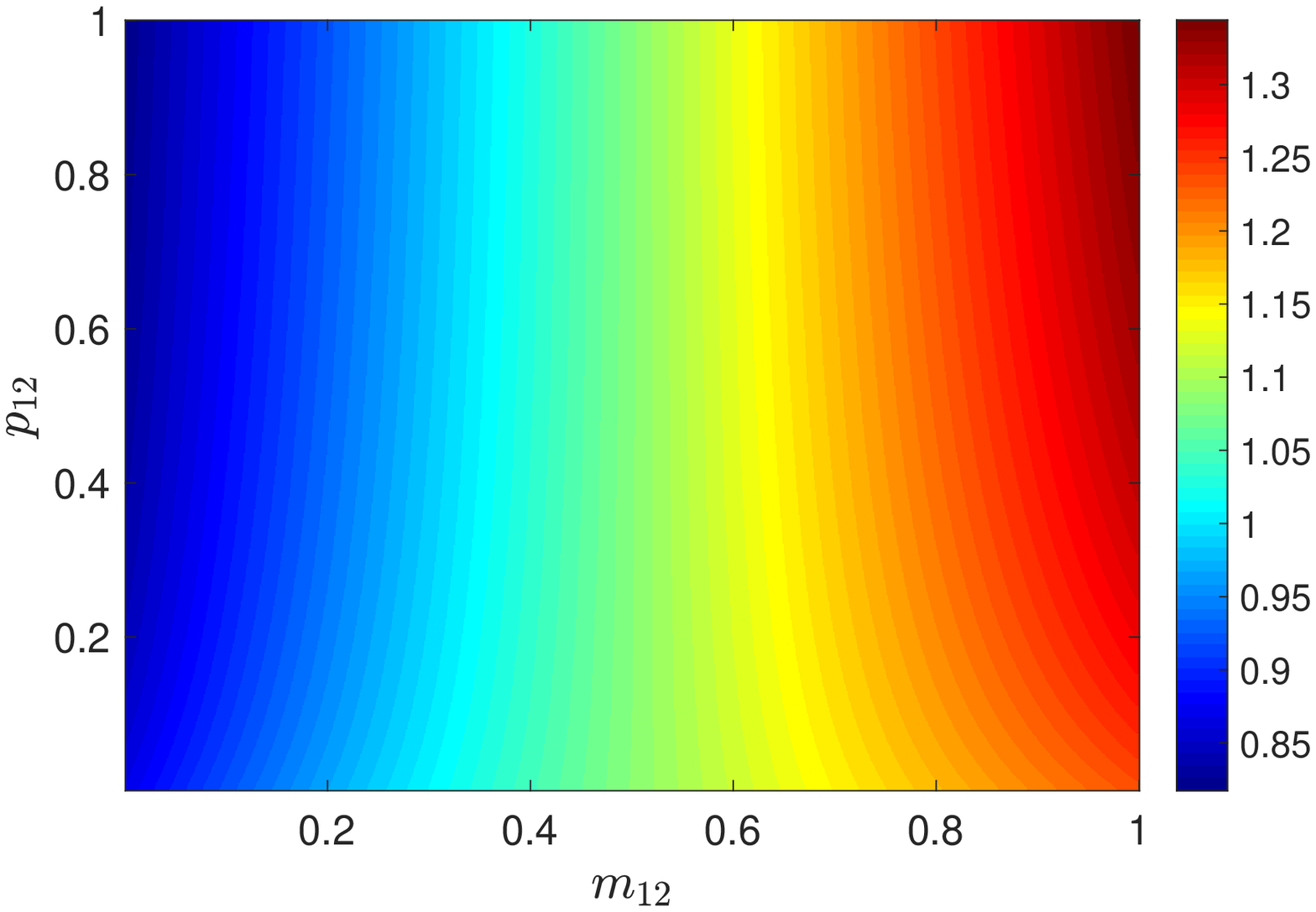}}
\caption{Variation of $\mathcal R_0$ with respect to the mobility patterns of susceptible and infected individuals of Group 1 in Patch 1 (\ref{fig:HeatR0m11vsp11}) and Patch 2 (\ref{fig:HeatR0m12vsp12}). Values of $\beta_{1}=0.2$, $\beta_2=0.1$ and $\beta_3=0.08$ are chosen for this set of simulations.}
\label{Fig:HeatR0vsM11P11}
\end{figure}

Fig  \ref{Fig:HeatR0vsM11P11} gives an overview of the dynamics of the basic reproduction number with respect of mobility patterns of susceptible and infected individuals of Group 1 in Patch 1 and Patch 2. Fig  \ref{fig:HeatR0m11vsp11} shows that $m_{11}$ and $p_{11}$ could  bring $\mathcal R_0$ from bellow unity to above unity. Particularly, if $m_{11}\geq0.4$, then $\mathcal R_0>1$, which lead to the persistence of the disease. Also, $\mathcal R_0$ is much higher when $m_{11}\geq0.7$ and $p_{11}\geq0.2$. Fig  \ref{fig:HeatR0m12vsp12} shows how $\mathcal R_0$ varies when the movement of infected and susceptible of Group 1 in Patch 2 change.\\

\begin{figure}
\centering     
\subfigure[$\mathcal R_0$ vs. $m_{11}$]{\label{fig:R0vsM11}\includegraphics[width=81mm]{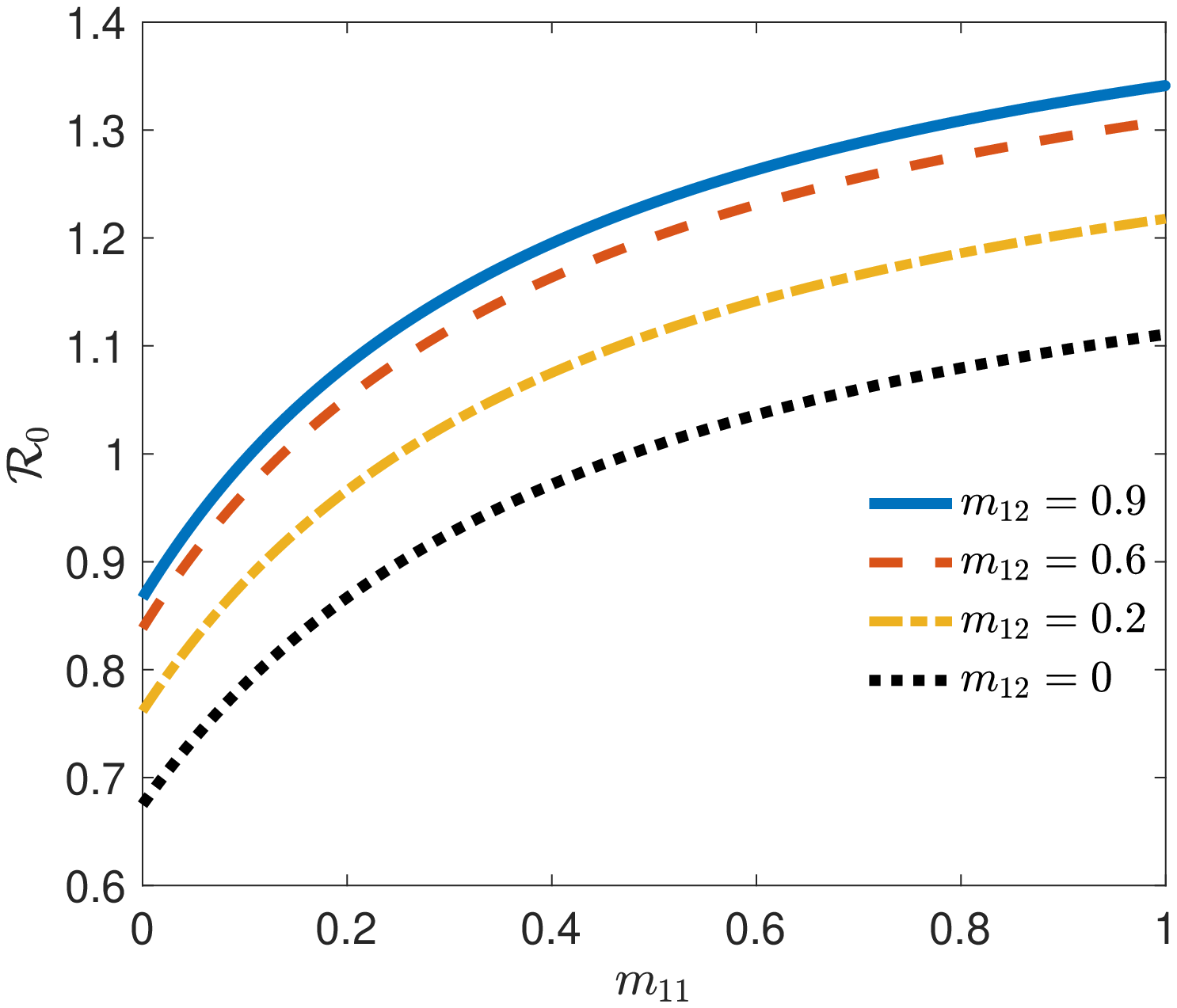}}
\subfigure[$\mathcal R_0$ vs. $p_{11}$]{\label{fig:R0vsP11}\includegraphics[width=81mm]{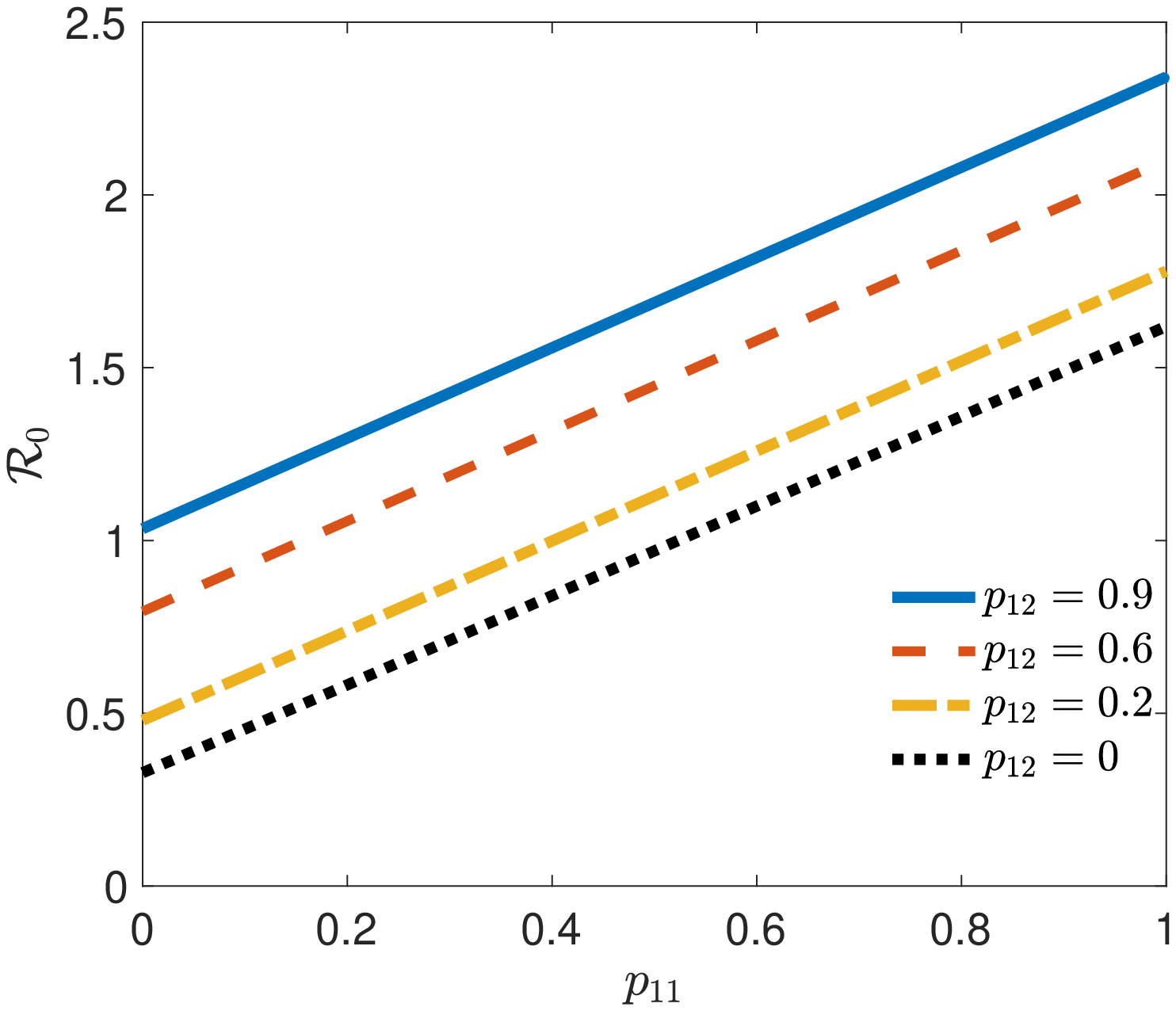}}
\caption{Variability of $\mathcal R_0$ with respect to $m_{11}$, $m_{12}$ and $p_{11}$, $p_{12}$. If all other parameters are fixed, $\mathcal R_0$ increases with respect to $m_{11}$ and $m_{12}$.}
\label{R0vsM11P11}
\end{figure}

In Figure (\ref{R0vsM11P11}), we revisit the variability of the basic reproduction number with respect of mobility patterns of susceptible and infected individuals of Group 1 (Fig \ref{Fig:HeatR0vsM11P11}). However, we obtain a clear  picture on how it changes.  Indeed, Fig. \ref{fig:R0vsM11}) suggests that $\mathcal R_0$ increases with respect to $m_{11}$ and $m_{12}$; and $p_{11}$ and $p_{12}$ (Fig. \ref{fig:R0vsP11}). However, $\mathcal R_0$ increases much faster with respect to $p_{11}$ than to $m_{11}$. Moreover, Fig \ref{fig:R0vsP11} confirms also the result of Theorem \ref{TheoremInegaR0P}, which states that the basic reproduction number increases with respect of $p_{ij}$, that is the movement patterns of infected individuals.

\begin{figure}[!tbp]
\centering
   \includegraphics[scale=0.4]{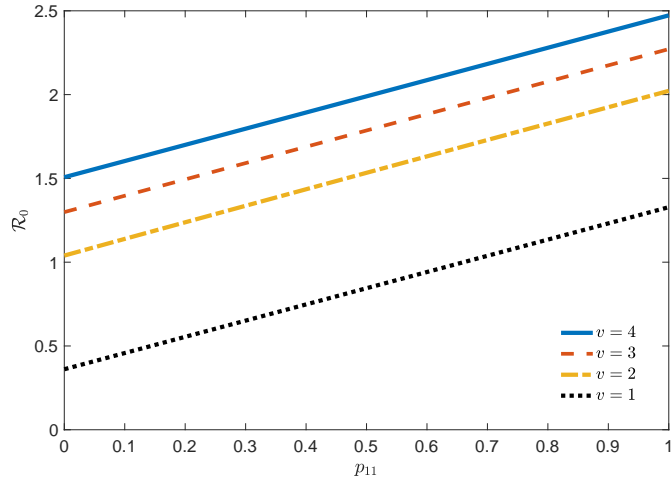}
   \caption{Effects of \textit{patchiness} on the basic reproduction number $\mathcal R_0$ with $u=3$. This risk of infection chosen for these 4 patches are: $\beta_1=0.25$, 
$\beta_2=0.15$,
$\beta_3=0.1$,
$\beta_4=0.08.$}
    \label{fig:R0IncreasesInPatches}
  \end{figure}

Fig (\ref{fig:R0IncreasesInPatches}) showcases that, for a fixed number of groups (3 in this case), the basic reproduction number increases as the number of patches increases, and that independently of the values of the risk of infection of the added patches. This figure, also confirms our the theoretical result in Item \ref{thm:Ineq3} of Theorem \ref{RTildeIneq}. It also shows a linear monotonicity of $\mathcal R_0(u,v)$ with respect to $\mathbb P$. Other values of $\beta$s than those of Fig. \ref{fig:R0IncreasesInPatches} exhibit similar patterns.
\section{Global stability of equilibria}\label{Sec:GAS}

The global stability of equilibria for the general Model (\ref{PatchGenMatrix}) happens to be very challenging. In fact, for models with such intricated nonlinearities, it is shown in  \cite{HuangCookeCCC92} that multiple endemic equilibria may exist. In this section, we explore the global stability of equilibria for some particular cases of the general model. 

\subsection{Identical Mobility and no disease induced  mortality}
In this subsection, we suppose that the host mobility to the patches is independent of the epidemiological status and that we neglect the disease induced mortality. In this case, the dynamics of the total population is given by 

$$\dot{\mathbf{N}}=\mathbf{\Lambda}-\mu\circ \mathbf{N}$$
Hence, $\displaystyle\lim_{t\to\infty}\mathbf{N}=\frac{\mathbf{\Lambda}}{\mu}:=\bar{\mathbf{N}}$. By using the theory of asymptotic systems \cite{CasThi95,0478.93044}, System (\ref{PatchGenMatrix}) is asymptotically equivalent to:
\begin{equation} \label{PatchGenMatrixSameResTimes}
\left\{\begin{array}{llll} 
\dot{\mathbf{S}}= {\mathbf{\Lambda}}-\textrm{diag}(\mathbf{S})\mathbb M\textrm{diag}(\mathcal{B})\textrm{diag}^{-1}(\mathbb{M}^{T}\bar{\mathbf{N}})\mathbb{M}^{T}\mathbf{I}  -\textrm{diag}(\mu) \mathbf{S}+\textrm{diag}(\eta)\mathbf{R}\\
\dot{\mathbf{E}}= \textrm{diag}(\mathbf{S})\mathbb M\textrm{diag}(\mathcal{B})\textrm{diag}^{-1}(\mathbb{M}^{T}\bar{\mathbf{N}})\mathbb{M}^{T}\mathbf{I}         -\textrm{diag}(\nu+\mu)\mathbf{E}\\
\dot{\mathbf{I}}=\textrm{diag}(\nu)\mathbf{E}- \textrm{diag}(\gamma+\mu)\mathbf{I} \\
\dot{\mathbf{R}}=\textrm{diag}(\gamma)\mathbf{I}- \textrm{diag}(\eta+\mu)\mathbf{R} 
\end{array}\right.
\end{equation}

Model (\ref{PatchGenMatrixSameResTimes}) generalizes models considered in \cite{bichara2015sis}. Let us denote $\mathcal R_0^{\textrm{Eq}}(u,v)$ the corresponding basic reproduction number of Model (\ref{PatchGenMatrixSameResTimes}). Its expression is $$\mathcal R_0^{\textrm{Eq}}(u,v)=\rho(\textrm{diag}(\mathbf{S^\ast})\mathbb M\textrm{diag}(\mathcal{B})\textrm{diag}^{-1}(\mathbb{M}^{T}\mathbf{S^\ast} )\mathbb{M}^{T} \textrm{diag}(\nu) \textrm{diag}^{-1}((\mu+\nu)\circ (\mu+\gamma)))$$
The following theorem gives the global stability of the disease free equilibrium.
\begin{theorem}\label{GlobalDFESameResTimes} \hfill

\n Whenever the host-patch mobility configuration $\mathbb M\mathbb M^T$ is irreducible, the following statements hold:
\begin{enumerate}
\item If $\mathcal R_0^{\textrm{Eq}}(u,v)\leq1$, the DFE is globally asymptotically stable (GAS). 
\item If $\mathcal R_0^{\textrm{Eq}}(u,v)>1$, the DFE is unstable.
\end{enumerate}
\end{theorem}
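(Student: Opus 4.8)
The plan is to treat the two assertions separately, the instability claim being essentially immediate and the global stability claim resting on a Lyapunov--LaSalle argument. For the second assertion, observe that the computation preceding the statement already exhibits $\mathcal R_0^{\textrm{Eq}}(u,v)$ as the spectral radius of the next generation matrix $-FV^{-1}$ attached to System~(\ref{PatchGenMatrixSameResTimes}); the general instability criterion of van den Driessche and Watmough \cite{VddWat02} then yields at once that the DFE is unstable whenever $\mathcal R_0^{\textrm{Eq}}(u,v)>1$. So the real work is the first assertion.

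For the global stability I would first record the two structural facts that drive the argument. On the attracting set $\Omega$ one has $\mathbf S+\mathbf E+\mathbf I+\mathbf R\le\mathbf S^\ast$ componentwise, and on the conservation plane $\mathbf S+\mathbf E+\mathbf I+\mathbf R=\bar{\mathbf N}=\mathbf S^\ast$ on which the limiting system lives this is even an equality; in either case $\mathbf S\le\mathbf S^\ast$. Writing $\mathbf x=(\mathbf E^T,\mathbf I^T)^T$, the infected subsystem reads $\dot{\mathbf x}=(F(\mathbf S)+V)\mathbf x$, where $F(\mathbf S)$ is obtained from the new-infection matrix $F$ by substituting $\textrm{diag}(\mathbf S)$ for $\textrm{diag}(\mathbf S^\ast)$; since all remaining factors are entrywise nonnegative and $\mathbf x\ge0$, the bound $\mathbf S\le\mathbf S^\ast$ gives $F(\mathbf S)\le F$ entrywise. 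Moreover $-V$ is a nonsingular M-matrix, so $(-V)^{-1}\ge0$.

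Next I would build the Lyapunov function from the left Perron eigenvector. Because $V$ is block lower-triangular, the only nonzero spectral contribution of $(-V)^{-1}F$ comes from a $u\times u$ block proportional to $\mathbb M\textrm{diag}(\mathcal B)\textrm{diag}^{-1}(\mathbb M^T\mathbf S^\ast)\mathbb M^T$; the hypothesis that $\mathbb M\mathbb M^T$ is irreducible makes this block irreducible, so $(-V)^{-1}F$ has a nonnegative left eigenvector $w$, strictly positive on its $\mathbf I$-block, with $w^T(-V)^{-1}F=\mathcal R_0^{\textrm{Eq}}(u,v)\,w^T$. I then set $Q=w^T(-V)^{-1}\mathbf x$; although $w$ has a vanishing $\mathbf E$-block, the row vector $w^T(-V)^{-1}$ is strictly positive, so $Q$ is a positive linear form vanishing exactly on $\{\mathbf E=\mathbf I=0\}$. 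Differentiating along trajectories and using $w^T(-V)^{-1}V=-w^T$ together with $F(\mathbf S)\le F$ and $w^T(-V)^{-1}\ge0$ gives the one-line estimate
\[
\dot Q = w^T(-V)^{-1}F(\mathbf S)\mathbf x - w^T\mathbf x \le \bigl(\mathcal R_0^{\textrm{Eq}}(u,v)-1\bigr)\,w^T\mathbf x \le 0 \qquad\bigl(\mathcal R_0^{\textrm{Eq}}(u,v)\le1\bigr),
\]
where $w^T\mathbf x$ is a strictly positive combination of the $I_i$.

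Finally I would invoke LaSalle's invariance principle. When $\mathcal R_0^{\textrm{Eq}}(u,v)<1$ the set $\{\dot Q=0\}$ lies inside $\{w^T\mathbf x=0\}=\{\mathbf I=0\}$, and on the largest invariant subset the $\mathbf I$-equation forces $\mathbf E=0$, then $\dot{\mathbf R}=-\textrm{diag}(\eta+\mu)\mathbf R$ forces $\mathbf R\to0$ and hence $\mathbf S\to\mathbf S^\ast$; thus the DFE is the only invariant set and is GAS. \textbf{The main obstacle is the boundary case} $\mathcal R_0^{\textrm{Eq}}(u,v)=1$, where the estimate only gives $\dot Q\le0$ and the equality set is larger: vanishing of $\dot Q$ forces $\textrm{diag}(\mathbf S^\ast-\mathbf S)\,\mathbb M\textrm{diag}(\mathcal B)\textrm{diag}^{-1}(\mathbb M^T\mathbf S^\ast)\mathbb M^T\mathbf I=0$ componentwise, i.e. for each $i$ either $S_i=S_i^\ast$ or the $i$-th infection pressure vanishes. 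On the largest invariant set I would use the irreducibility of $\mathbb M\mathbb M^T$ to propagate this constraint across the coupled groups and conclude $\mathbf I\equiv0$ there (equivalently $\mathbf S\equiv\mathbf S^\ast$, which on the conservation plane immediately gives $\mathbf E=\mathbf I=\mathbf R=0$). Making this propagation rigorous is the delicate point of the proof.
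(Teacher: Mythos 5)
Your proposal is correct and follows essentially the same route as the paper: the Lyapunov function $w^{T}(-V)^{-1}\mathbf x$ built from the left Perron eigenvector (guaranteed positive where needed by the irreducibility of $\mathbb M\mathbb M^{T}$), the key bound $\textrm{diag}(\mathbf S)\le\textrm{diag}(\mathbf S^{\ast})$ giving $\dot Q\le(\mathcal R_0^{\textrm{Eq}}-1)\,w^{T}\mathbf x$, LaSalle for both $\mathcal R_0^{\textrm{Eq}}<1$ and the boundary case, and van den Driessche--Watmough for instability. Your eigenvector bookkeeping (zero $\mathbf E$-block for the left eigenvector of $(-V)^{-1}F$, with $w^{T}(-V)^{-1}\gg0$) is in fact a cleaner rendition of the paper's construction, and the boundary-case analysis you flag as delicate is handled no more explicitly in the paper itself.
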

\begin{proof} \hfill

Let $(w_E,w_I)$ a left eigenvector of $Z \textrm{diag}(\nu) \textrm{diag}^{-1}((\mu+\nu)\circ (\mu+\gamma))$ corresponding to 
$\rho(Z \textrm{diag}(\nu) \textrm{diag}^{-1}((\mu+\nu)\circ (\mu+\gamma)))$  where $$Z=\textrm{diag}(\mathbf{S^\ast})\mathbb M\textrm{diag}(\mathcal{B})\textrm{diag}^{-1}(\mathbb{M}^{T}\mathbf{S^\ast} )\mathbb{M}^{T}$$

\n Hence,
\begin{eqnarray}
(w_E,w_I)Z \textrm{diag}(\nu) \textrm{diag}^{-1}((\mu+\nu)\circ (\mu+\gamma))&=&(w_E,w_I)\rho(Z \textrm{diag}(\nu) \textrm{diag}^{-1}((\mu+\nu)\circ (\mu+\gamma)))\nonumber\\
&=&(w_E,w_I)\rho(-FV^{-1})\nonumber
\end{eqnarray}
Since $\mathbb M\mathbb M^T$ is irreducible, the matrix $Z \textrm{diag}(\nu) \textrm{diag}^{-1}((\mu+\nu)\circ (\mu+\gamma))$ is irreducible. This implies that $(w_E,w_I)\gg0$.

We consider the Lyapunov function $$V(\mathbf{E},\mathbf{I})=(w_E,w_I) \left(\begin{array}{cccc}
\textrm{diag}^{-1}( \mu+\nu)& \textbf{0}_{u,u}\\ 
\textrm{diag}(\nu) \textrm{diag}^{-1}((\mu+\nu)\circ (\mu+\gamma)) & \textrm{diag}^{-1}(\mu+\gamma)
\end{array}\right)\left(\begin{array}{c}
 \mathbf{E}\\
 \mathbf{I}
\end{array}\right)$$

The derivative of $V(\mathbf{E},\mathbf{I})$ along trajectories of (\ref{PatchGenMatrixSameResTimes}) is
\begin{eqnarray}
\dot V(\mathbf{E},\mathbf{I})&=&(w_E,w_I) \left(\begin{array}{cccc}
\textrm{diag}( \mu+\nu)^{-1}& \textbf{0}_{u,u}\\ 
\textrm{diag}(\nu) \textrm{diag}^{-1}((\mu+\nu)\circ (\mu+\gamma)) & \textrm{diag}^{-1}(\mu+\gamma)
\end{array}\right)\left(\begin{array}{c}
 \dot{\mathbf{E}}\\
\dot{\mathbf{I}}
\end{array}\right)\nonumber\\
&=&(\tilde w_E, \tilde w_I) \left(\begin{array}{cccc}
-\textrm{diag}( \mu+\nu)& \textrm{diag}(\mathbf{S})\mathbb M\textrm{diag}(\mathcal{B})\textrm{diag}^{-1}(\mathbb{M}^{T}{\bar{\mathbf{N}}})\mathbb{M}^{T}\\ 
\textrm{diag}(\nu)  & -\textrm{diag}(\mu+\gamma) 
\end{array}\right)\left(\begin{array}{c}
 \mathbf{E}\\
\mathbf{I}
\end{array}\right)\nonumber
\end{eqnarray}
where $\tilde w_E=w_E\textrm{diag}^{-1}( \mu+\nu)+w_I\textrm{diag}(\nu) \textrm{diag}^{-1}((\mu+\nu)\circ (\mu+\gamma))$ and $\tilde w_I=w_I \textrm{diag}^{-1}(\mu+\gamma)$, or equivalently
$(\tilde w_E, \tilde w_I)=(w_E,w_I)(-V^{-1})$.\\

Since $ \textrm{diag}(\mathbf{S})\leq  \textrm{diag}(\mathbf{S^\ast})$ and $\mathbf{S^\ast}=\bar{\mathbf{N}}$, we obtain (denoting $\mathbb{I}$ the identity matrix),

\begin{eqnarray}
\dot V(\mathbf{E},\mathbf{I})&\leq&
(\tilde w_E, \tilde w_I )(F+V)\left(\begin{array}{c}
  \mathbf{E}\\
  \mathbf{I}
\end{array}\right)\nonumber\\
    &= &(w_E,w_I)\left(-V^{-1}F-\mathbb{I}\right)\left(\begin{array}{c}
  \mathbf{E}\\
  \mathbf{I}
\end{array}\right)\nonumber\\
 &= &\left(\mathcal R_0^{\textrm{Eq}}(u,v)-1\right)(w_E,w_I)\left(\begin{array}{c}
  \mathbf{E}\\
  \mathbf{I}
\end{array}\right)\nonumber\\
&\leq& 0.\nonumber
\end{eqnarray}
We consider first the case when $\mathcal R_0^{\textrm{Eq}}(u,v)<1$. Let $\mathcal E$ be an invariant set contained in $\Omega$, where $\dot V(\mathbf{E},\mathbf{I})=0$. This set is reduced to the origin of $\R^{2u}$ (i.e., $(\mathbf{E},\mathbf{I})=(0,0)$).  This, combined to the invariance of $\mathcal E$, leads to $\mathbf{R}=0$ and $\mathbf{S}=\mathbf{S}^\ast$. Hence, the only invariant set contained in  $\Omega$, such that $\dot V(\mathbf{E},\mathbf{I})=0$, is reduced to the DFE. Hence, by LaSalle's invariance principle \cite{Bhatia70,LaSLef61}, the DFE is globally asymptotically stable on $\Omega$. Since  $\Omega$ is an attracting set, we conclude that the DFE is GAS on the positive orthant $\R^{4u}_+$. \\
When $\mathcal R_0^{\textrm{Eq}}(u,v)=1$, we can show that 
\[\begin{array}{ll}
     \dot V(\mathbf{E},\mathbf{I}) = & (w_E+w_I\,\diag(\nu) \diag(\mu+\gamma+\delta)^{-1})\diag( \mu+\nu)^{-1}\,(\textrm{diag}(\mathbf{S})- \textrm{diag}(\mathbf{S^\ast}))\cdot   \\
      &   \mathbb M\textrm{diag}(\mathcal{B})\textrm{diag}^{-1}(\mathbb{M}^{T}{\bar{\mathbf{N}}})\mathbb{M}^{T}\, 
\mathbf{I} \\
& \leq 0 .
\end{array}
\]
Therefore, as above, LaSalle's invariance principle allows to conclude. \\

\n The instability of the DFE when $\mathcal R_0^{\textrm{Eq}}(u,v)>1$ follows from \cite{MR1057044,VddWat02}.
\end{proof}
The following theorem provides the uniqueness of the endemic equilibrium.\\

\begin{theorem}\label{UniquenessEE} \hfill

 If $\mathcal R_0^{\textrm{Eq}}(u,v)>1$,  Model (\ref{PatchGenMatrixSameResTimes}) has a unique endemic equilibrium. 
\end{theorem}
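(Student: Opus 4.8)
The plan is to reduce the equilibrium problem to a fixed-point equation for the force of infection and then invoke the theory of monotone, strictly sublinear (concave) operators. First I would use the last three equations of (\ref{PatchGenMatrixSameResTimes}) at equilibrium to express everything in terms of $\mathbf I$: from $\dot{\mathbf R}=0$ and $\dot{\mathbf I}=0$ one gets $R_i=\frac{\gamma_i}{\eta_i+\mu_i}I_i$ and $E_i=\frac{\gamma_i+\mu_i}{\nu_i}I_i$, and combining $\dot{\mathbf S}=0$ with $\dot{\mathbf E}=0$ yields $S_i=S_i^\ast-a_iI_i$ with $a_i=\frac1{\mu_i}\big(\frac{(\nu_i+\mu_i)(\gamma_i+\mu_i)}{\nu_i}-\frac{\eta_i\gamma_i}{\eta_i+\mu_i}\big)$. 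A short computation shows $a_i>0$, so $S_i$ stays positive precisely on the biologically relevant range. Writing $A=\mathbb M\diag(\mathcal B)\diag^{-1}(\mathbb M^T\mathbf S^\ast)\mathbb M^T$ (symmetric, entrywise nonnegative, with the same irreducibility pattern as $\mathbb M\mathbb M^T$) and the force of infection $\mathbf y=A\mathbf I$, the equation $\dot{\mathbf E}=0$ reads $S_iy_i=d_iI_i$ with $d_i=\frac{(\nu_i+\mu_i)(\gamma_i+\mu_i)}{\nu_i}$; solving together with $S_i=S_i^\ast-a_iI_i$ gives $I_i=g_i(y_i):=\frac{S_i^\ast y_i}{d_i+a_iy_i}$, a Monod-type function that is increasing, strictly concave, vanishes at $0$, and is bounded above by $S_i^\ast/a_i$.

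Substituting back produces the self-consistent equation $\mathbf y=H(\mathbf y):=A\,\mathbf g(\mathbf y)$ on $\R^u_+$. The map $H$ is continuous, order-preserving (as $A\ge0$ and each $g_i$ is increasing), uniformly bounded (each $g_i$ is bounded), satisfies $H(0)=0$, and is strictly sublinear: $H(\lambda\mathbf y)>\lambda H(\mathbf y)$ for $0<\lambda<1$ on the positive cone, which follows from strict concavity of the $g_i$ together with $g_i(0)=0$. Its linearization at the origin is $H'(0)=A\,\diag(S_i^\ast/d_i)$, and since $\diag(1/d_i)=\diag(\nu)\diag^{-1}((\mu+\nu)\circ(\mu+\gamma))$, the identity $\rho(BC)=\rho(CB)$ gives $\rho(H'(0))=\mathcal R_0^{\textrm{Eq}}(u,v)$.

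For existence I would use monotone iteration. Assuming $\mathbb M\mathbb M^T$ (hence $A$, hence $H'(0)$) irreducible, Perron--Frobenius furnishes an eigenvector $\mathbf w\gg0$ with $H'(0)\mathbf w=\rho\mathbf w$ and $\rho=\mathcal R_0^{\textrm{Eq}}(u,v)>1$. Then $H(\varepsilon\mathbf w)=\varepsilon\rho\mathbf w+o(\varepsilon)>\varepsilon\mathbf w$ for small $\varepsilon$, so $\varepsilon\mathbf w$ is a strict subsolution, while the uniform bound on $H$ provides a supersolution above it; the monotone sequence started at $\varepsilon\mathbf w$ increases, remains bounded, and converges to a fixed point $\mathbf y^\ast\ge\varepsilon\mathbf w\gg0$. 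Because $g_i(y_i)<S_i^\ast/a_i$ automatically, the reconstructed $(\mathbf S,\mathbf E,\mathbf I,\mathbf R)$ has $\mathbf S\gg0$, so the equilibrium is automatically feasible.

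The core of the argument, and the step I expect to be the main obstacle, is uniqueness, which I would establish by a maximum-ratio argument exploiting sublinearity. Suppose $\mathbf y,\mathbf y'\gg0$ are two positive fixed points (irreducibility excludes boundary equilibria: if some $I_i=0$ then $(A\mathbf I)_i=0$ forces $A_{ik}=0$ for every $k$ with $I_k>0$, contradicting irreducibility unless $\mathbf I=0$). Set $\sigma=\max_k y_k/y_k'\ge1$, attained at $i_0$. Monotonicity of $g_k$ and the inequality $g_k(\sigma t)\le\sigma g_k(t)$ for $\sigma\ge1$ (the reverse form of sublinearity for concave maps vanishing at $0$) give $y_{i_0}=\sum_kA_{i_0k}g_k(y_k)\le\sigma\sum_kA_{i_0k}g_k(y_k')=\sigma y_{i_0}'$; since $y_{i_0}=\sigma y_{i_0}'$, every inequality is an equality, yet for any $k$ with $A_{i_0k}>0$ and $y_k'>0$ (one exists because $y_{i_0}>0$) strict concavity forces $g_k(\sigma y_k')<\sigma g_k(y_k')$ whenever $\sigma>1$, a contradiction. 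Hence $\sigma=1$, so $\mathbf y\le\mathbf y'$, and by symmetry $\mathbf y=\mathbf y'$; since $\mathbf I$ and then $\mathbf S,\mathbf E,\mathbf R$ are determined by $\mathbf y$, the endemic equilibrium is unique. The delicate points are the strictness in the sublinearity estimate and the use of irreducibility to ensure the comparison index $i_0$ is genuinely coupled to a positive component; an alternative would be to cite a standard uniqueness theorem for monotone concave operators, but the self-contained max-ratio argument is cleaner here.
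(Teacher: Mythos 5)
Your proof is correct, but it follows a genuinely different route from the paper's. The paper proves this theorem by invoking the argument of Theorem \ref{UniquenessEEBil}: the equilibrium conditions are reduced (via Lemmas \ref{GEE} and \ref{GFrelationship}) to a fixed-point equation $y=g(y)$ in the latent variable $\bar{\mathbf E}$, which is then embedded as the equilibrium problem of an auxiliary ODE $\dot y=F(y)$ whose Jacobian is irreducible Metzler with $F'$ antimonotone, so that Hirsch's theorem on strongly monotone concave systems delivers existence, uniqueness, \emph{and} global attractivity of the positive equilibrium of that auxiliary system. You instead reduce to a fixed point in the force-of-infection variable $\mathbf y=A\mathbf I$, observe that the scalar nonlinearities $g_i(y)=S_i^\ast y/(d_i+a_iy)$ are Monod-type (increasing, strictly concave, bounded, vanishing at $0$), and run the classical Lajmanovich--Yorke/Hethcote--Thieme scheme: monotone iteration from a subsolution $\varepsilon\mathbf w$ built on the Perron eigenvector for existence, and a maximum-ratio argument exploiting strict sublinearity for uniqueness. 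Your computations check out (in particular $a_i>0$ is the analogue of the paper's $K>0$, and $\rho(H'(0))=\mathcal R_0^{\textrm{Eq}}(u,v)$ via commutation of the diagonal factors), and your route is more elementary and self-contained, needing only Perron--Frobenius rather than the machinery of monotone flows; it also makes the feasibility $\mathbf S\gg0$ of the reconstructed equilibrium explicit. What it does not give is the global attractivity statement that Hirsch's theorem provides for the auxiliary system. Note that both arguments require an irreducibility hypothesis (for you, irreducibility of $A=\mathbb M\diag(\mathcal B)\diag^{-1}(\mathbb M^T\mathbf S^\ast)\mathbb M^T$, which needs $\mathcal B\gg0$ and not merely irreducibility of $\mathbb M\mathbb M^T$); the theorem statement omits it, but the paper's own proof uses it in exactly the same implicit way, so this is not a gap specific to your argument.
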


The proof of this theorem is similar to that of Theorem \ref{UniquenessEEBil} in the next subsection.

\subsection{{\textbf{\textit{Effective population size}}} dependent risk} \label{DDTransmission}

So far, the risk associated with each patch is represented by the constant vector $\mathcal B$. However, in some cases, it is more appropriate to assume that the risk of catching a disease depends on the size of the population or crowd, that is the \textit{effective population size} in each patch. In this subsection, we suppose that the risk of infection in each patch $j$ is linearly proportional to the \textit{effective population size}, that is $N_j^{\textrm{eff}}=\sum_{k=1}^u(m_{ij}S_i+n_{ij}E_i+p_{ij}I_i+q_{ij}R_i)$. Hence,

$$\beta_j(N_j^{\textrm{eff}})=\beta_j\sum_{k=1}^u(m_{kj}S_k+n_{kj}E_k+p_{kj}I_k+q_{kj}R_k)$$
Hence, the rate at which susceptible individuals  are infected in Patch $j$ is, therefore

$$\beta_j(N_j^{\textrm{eff}})\frac{\sum_{k=1}^up_{kj}I_k}{\sum_{k=1}^u(m_{kj}S_k+n_{kj}E_k+p_{kj}I_k+q_{kj}R_k)}:=\beta_j\sum_{k=1}^up_{kj}I_k$$
Therefore, in this settings, the dynamics of the population in different epidemiological classes take the form:

\begin{equation} \label{PatchGenBilin}
\left\{\begin{array}{llll}
\displaystyle\dot S_{i}= \Lambda_i-\sum_{j=1}^v\beta_jm_{ij}S_i\sum_{k=1}^up_{kj}I_k-\mu_iS_i+\eta_iR_i,\\
\displaystyle\dot E_i=\sum_{j=1}^v\beta_jm_{ij}S_i\sum_{k=1}^up_{kj}I_k-(\nu_i+\mu_i)E_i\\
\dot I_i=\nu_iE_i-(\gamma_i+\mu_i+\delta_i)I_i\\
\dot R_i=\gamma_iI_i-(\eta_i+\mu_i)R_i
\end{array}\right.
\end{equation}
System (\ref{PatchGenBilin}) could be written in a compact form as follows:
\begin{equation} \label{PatchGenMatrixBil}
\left\{\begin{array}{llll}
\dot{\mathbf{S}}= {\mathbf{\Lambda}}-\textrm{diag}(\mathbf{S})\mathbb M\textrm{diag}(\mathcal{B})\mathbb{P}^{T}\mathbf{I}  -\textrm{diag}(\mu) \mathbf{S}+\textrm{diag}(\eta)\mathbf{R}\\
\dot{\mathbf{E}}=\textrm{diag}(\mathbf{S})\mathbb M\textrm{diag}(\mathcal{B})\mathbb{P}^{T}\mathbf{I} -\textrm{diag}(\nu+\mu)\mathbf{E}\\
\dot{\mathbf{I}}=\textrm{diag}(\nu)\mathbf{E}- \textrm{diag}(\gamma+\mu+\delta)\mathbf{I} \\
\dot{\mathbf{R}}=\textrm{diag}(\gamma)\mathbf{I}- \textrm{diag}(\eta+\mu)\mathbf{R} 
\end{array}\right.
\end{equation}

Clearly, System (\ref{PatchGenMatrixBil}) is a particular case of System (\ref{PatchGenMatrix}) when the transmission term takes a modified density-dependent form. Positivity and boundedness properties of solutions of System (\ref{PatchGenMatrix}) hold for those of System (\ref{PatchGenMatrixBil}). The basic reproduction number of Model (\ref{PatchGenMatrixBil}), denoted by $\mathcal R_0^{\textrm{DD}}(u,v) $ is:

$$\mathcal R_0^{\textrm{DD}}(u,v) =\rho(\textrm{diag}(\mathbf{S^\ast})\mathbb M\textrm{diag}(\mathcal{B})\mathbb{P}^{T}\textrm{diag}(\nu) \textrm{diag}^{-1}((\mu+\nu)\circ (\mu+\gamma+\delta)))$$

We explore the properties of steady state solutions. The following result gives the global stability of the DFE. Its proof is similar to that of Theorem \ref{GlobalDFESameResTimes}.

\begin{corollary} \hfill

\n Whenever the host-patch mobility configuration $\mathbb M\mathbb P^T$ is irreducible, the following statements hold:
\begin{enumerate}
\item If $\mathcal R_0^{\textrm{DD}}(u,v)\leq1$, the DFE is globally asymptotically stable. 
\item If $\mathcal R_0^{\textrm{DD}}(u,v)>1$, the DFE is unstable.
\end{enumerate}
\end{corollary}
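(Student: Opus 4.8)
The plan is to reproduce, almost line for line, the Lyapunov argument used for Theorem~\ref{GlobalDFESameResTimes}, taking advantage of the fact that the \emph{effective population size} transmission turns the force of infection into a genuinely bilinear (mass-action) term, which dispenses with the state-dependent denominator and hence with the asymptotically-autonomous reduction used there. First I would record the next generation data of System~(\ref{PatchGenMatrixBil}): at the DFE the new-infection Jacobian $F$ has the single nonzero block $Z=\textrm{diag}(\mathbf{S^\ast})\mathbb M\textrm{diag}(\mathcal B)\mathbb P^{T}$ coupling $\mathbf I$ into $\dot{\mathbf E}$, while the transition matrix $V$ — and therefore $-V^{-1}$ — is exactly the one displayed for the general model. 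Consequently $\mathcal R_0^{\textrm{DD}}(u,v)=\rho(Z\,\textrm{diag}(\nu)\,\textrm{diag}^{-1}((\mu+\nu)\circ(\mu+\gamma+\delta)))$, and since $\mathbb M\mathbb P^{T}$ is assumed irreducible this nonzero block of the next generation matrix is irreducible. Perron--Frobenius then supplies a Perron eigenvector whose associated weight row $(w_E,w_I)(-V^{-1})$ is strictly positive; this strict positivity is exactly what the argument below requires.

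Next I would take the same linear Lyapunov candidate $V(\mathbf E,\mathbf I)=(w_E,w_I)(-V^{-1})(\mathbf E,\mathbf I)^{T}$, which is nonnegative on $\Omega$ because the weights and $-V^{-1}$ are nonnegative. Writing the infected subsystem as $(\dot{\mathbf E},\dot{\mathbf I})^{T}=(\textrm{diag}(\mathbf S)\mathbb M\textrm{diag}(\mathcal B)\mathbb P^{T}\mathbf I,\ \mathbf 0)^{T}+V(\mathbf E,\mathbf I)^{T}$, the one substantive estimate is that on $\Omega$ one has $\mathbf S\leq\mathbf\Lambda\circ\tfrac1\mu=\mathbf{S^\ast}$ \emph{directly} from the invariance of $\Omega$, with no appeal to the limiting total population even though $\delta\neq\mathbf 0$ here. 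Because every matrix in sight is entrywise nonnegative, this yields $\textrm{diag}(\mathbf S)\mathbb M\textrm{diag}(\mathcal B)\mathbb P^{T}\mathbf I\leq\textrm{diag}(\mathbf{S^\ast})\mathbb M\textrm{diag}(\mathcal B)\mathbb P^{T}\mathbf I$ componentwise, and feeding this into $\dot V$ together with the eigen-relation gives $\dot V\leq(w_E,w_I)(-V^{-1}F-\mathbb I)(\mathbf E,\mathbf I)^{T}=(\mathcal R_0^{\textrm{DD}}(u,v)-1)(w_E,w_I)(\mathbf E,\mathbf I)^{T}\leq 0$ whenever $\mathcal R_0^{\textrm{DD}}(u,v)\leq 1$.

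Finally I would split into the two cases exactly as in Theorem~\ref{GlobalDFESameResTimes}. When $\mathcal R_0^{\textrm{DD}}(u,v)<1$ the strict positivity of the weights makes the largest invariant subset of $\{\dot V=0\}$ reduce to $(\mathbf E,\mathbf I)=(\mathbf 0,\mathbf 0)$, whence invariance forces $\mathbf R=\mathbf 0$ and $\mathbf S=\mathbf{S^\ast}$, so LaSalle's invariance principle \cite{LaSLef61,Bhatia70} gives global asymptotic stability of the DFE on $\Omega$, hence on $\R^{4u}_+$ since $\Omega$ is attracting; instability for $\mathcal R_0^{\textrm{DD}}(u,v)>1$ is immediate from \cite{MR1057044,VddWat02}. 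The real work — and the step I expect to be the main obstacle — is the boundary case $\mathcal R_0^{\textrm{DD}}(u,v)=1$: there the leading term cancels and $\dot V$ collapses to a residual of the form $(w_E,w_I)(-V^{-1})$ applied to the vector $((\textrm{diag}(\mathbf S)-\textrm{diag}(\mathbf{S^\ast}))\mathbb M\textrm{diag}(\mathcal B)\mathbb P^{T}\mathbf I,\ \mathbf 0)^{T}\leq\mathbf 0$, and one must show that the largest invariant subset of $\{\dot V=0\}$ is still the DFE. This is where irreducibility of $\mathbb M\mathbb P^{T}$ is genuinely needed: on that set $\dot V=0$ forces $(S_i^\ast-S_i)\sum_{j}\beta_j m_{ij}(\mathbb P^{T}\mathbf I)_j=0$ for every $i$, and one must argue that connectivity through $\mathbb M\mathbb P^{T}$ spreads $\mathbf S=\mathbf{S^\ast}$ to all groups, after which the $\mathbf S$-, $\mathbf E$- and $\mathbf R$-equations on the invariant set collapse everything to the disease-free state.
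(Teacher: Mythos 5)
Your proposal is correct and follows essentially the same route as the paper, which simply states that the proof of this corollary is analogous to that of Theorem~\ref{GlobalDFESameResTimes}: the same left-Perron-eigenvector Lyapunov function, the bound $\mathbf{S}\leq\mathbf{S}^\ast$ on $\Omega$, LaSalle's invariance principle for $\mathcal R_0^{\textrm{DD}}(u,v)\leq 1$ (including the boundary case), and the standard instability result for $\mathcal R_0^{\textrm{DD}}(u,v)>1$. Your observation that the bilinear incidence removes the need for the asymptotically-autonomous reduction, so that $\delta\neq\mathbf{0}$ causes no difficulty, is a correct and worthwhile clarification of why the argument transfers.
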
 

The proof of the existence and uniqueness of the endemic equilibrium (EE) for Model (\ref{PatchGenMatrixBil}) is done in two steps, by carefully crafting a new auxiliary system whose EE uniqueness is tied to that of Model (\ref{PatchGenMatrixBil}). 

Let 
\begin{multline}\label{ALK} A=\diag^{-1}(\eta+\mu)\,\diag(\gamma)\,\diag^{-1}(\gamma+\mu+\delta)\,\diag(\nu),\quad L=\diag^{-1}(\gamma+\mu+\delta)\,\diag(\nu)\\\textrm{and}\quad \quad K=\diag^{-1}(\mu)\diag(\nu+\mu)-\diag^{-1}(\mu)\,\diag(\eta)\,A\quad\quad\quad\end{multline}      

We have the following lemma,
\begin{lemma}\label{GEE}\hfill 

Model (\ref{PatchGenMatrixBil}) has a unique endemic equilibrium if the function
$$g(y)=\diag^{-1}(\nu+\mu)\,
\diag(S^*-Ky)\mathbb M\diag(\mathcal{B})
\mathbb{P}^{T}\,Ly,
$$
  has  a unique fixed point.
\end{lemma}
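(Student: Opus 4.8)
The plan is to show that the endemic equilibria of Model~(\ref{PatchGenMatrixBil}) correspond exactly to the nonzero fixed points of $g$, by collapsing the four vector equilibrium equations into the single fixed-point equation $g(y)=y$ in the latent variable $y:=\mathbf{E}$. First I would solve the last two equilibrium equations. Setting $\dot{\mathbf{I}}=0$ gives $\mathbf{I}=\diag^{-1}(\gamma+\mu+\delta)\diag(\nu)\mathbf{E}=L\,y$, and then $\dot{\mathbf{R}}=0$ gives $\mathbf{R}=\diag^{-1}(\eta+\mu)\diag(\gamma)\mathbf{I}=\diag^{-1}(\eta+\mu)\diag(\gamma)L\,y=A\,y$, which is precisely how $A$ and $L$ in (\ref{ALK}) are defined.

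Next I would recover $\mathbf{S}$ while avoiding the nonlinear infection term. Adding the equilibrium identities $\dot{\mathbf{S}}=0$ and $\dot{\mathbf{E}}=0$, the two infection terms cancel and one is left with $\mathbf{\Lambda}-\diag(\mu)\mathbf{S}+\diag(\eta)\mathbf{R}-\diag(\nu+\mu)\mathbf{E}=0$. Solving for $\mathbf{S}$ and using $\mathbf{S}^\ast=\diag^{-1}(\mu)\mathbf{\Lambda}$, $\mathbf{R}=A\,y$ and $\mathbf{E}=y$ yields
$$\mathbf{S}=\mathbf{S}^\ast-\bigl(\diag^{-1}(\mu)\diag(\nu+\mu)-\diag^{-1}(\mu)\diag(\eta)A\bigr)y=\mathbf{S}^\ast-K\,y,$$
which is exactly the matrix $K$ of (\ref{ALK}). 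Substituting $\mathbf{S}=\mathbf{S}^\ast-Ky$ and $\mathbf{I}=Ly$ into the one remaining equation $\dot{\mathbf{E}}=0$, namely $\diag(\mathbf{S})\mathbb{M}\diag(\mathcal{B})\mathbb{P}^{T}\mathbf{I}=\diag(\nu+\mu)\mathbf{E}$, and multiplying on the left by $\diag^{-1}(\nu+\mu)$, produces exactly $g(y)=y$.

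This gives one direction of the correspondence. For the converse, given a fixed point $y$ of $g$ I would reconstruct $(\mathbf{S},\mathbf{E},\mathbf{I},\mathbf{R})=(\mathbf{S}^\ast-Ky,\,y,\,Ly,\,Ay)$ and verify it solves the full equilibrium system: the equations $\dot{\mathbf{I}}=0$ and $\dot{\mathbf{R}}=0$ hold by the definitions of $L$ and $A$, the equation $\dot{\mathbf{E}}=0$ is the fixed-point relation itself, and $\dot{\mathbf{S}}=0$ then follows from $\dot{\mathbf{S}}=(\dot{\mathbf{S}}+\dot{\mathbf{E}})-\dot{\mathbf{E}}=0$ since the summed equation was used to define $\mathbf{S}$. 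Because $g(0)=0$ always reproduces the DFE, the endemic equilibria are in bijection with the \emph{nonzero} fixed points of $g$, so uniqueness of the latter forces uniqueness of the former.

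The step needing care is not the algebra but the feasibility bookkeeping of this bijection. One must check that a nonzero fixed point $y$ yields a \emph{biologically admissible} state, i.e. that $\mathbf{S}=\mathbf{S}^\ast-Ky\ge 0$ together with $y,\,Ly,\,Ay\ge 0$, so that the reconstructed point lies in the invariant region $\Omega$ and has genuinely nonzero infected classes; this is also why the hypothesis is to be read as uniqueness of the fixed point on the relevant positive domain rather than on all of $\R^u$. In this sense the lemma is a clean reduction: it transfers the entire existence-and-uniqueness question for the endemic equilibrium to the fixed-point problem $g(y)=y$, which is the object then analyzed in the uniqueness argument of the following subsection.
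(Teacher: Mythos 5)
Your proposal is correct and follows essentially the same route as the paper: eliminate $\mathbf{R}$ and $\mathbf{I}$ via $A$ and $L$, express $\mathbf{S}$ as $\mathbf{S}^\ast-Ky$, and collapse the remaining equation into $g(y)=y$ (your trick of adding the $\dot{\mathbf{S}}=0$ and $\dot{\mathbf{E}}=0$ equations to cancel the infection term is just a cleaner derivation of the same identity $x=Ky$ that the paper obtains by substitution). Your explicit remark that $g(0)=0$ corresponds to the DFE, so the correspondence is with \emph{nonzero} fixed points, is a point the paper handles only implicitly by restricting to $\bar y\gg 0$, and your attention to feasibility ($\mathbf{S}^\ast-Ky\geq 0$) is a reasonable extra care not spelled out in the original.
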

\begin{proof}\hfill

Let ($\mathbf{\bar S}, \mathbf{\bar E}, \mathbf{\bar I}, \mathbf{\bar R}$) an equilibrium point of System (\ref{PatchGenMatrixBil}) with $\mathbf{\bar I}\gg 0$. This equilibrium satisfies the following system:

\begin{equation} \label{EquilibriumRelations}
\left\{\begin{array}{llll}
\mathbf{0}= {\mathbf{\Lambda}}-\textrm{diag}(\mathbf{\bar S})\mathbb M\textrm{diag}(\mathcal{B})\mathbb{P}^{T}\mathbf{\bar I}  -\textrm{diag}(\mu) \mathbf{\bar S}+\textrm{diag}(\eta)\mathbf{\bar R}\\
\mathbf{0}= \textrm{diag}(\mathbf{\bar S})\mathbb M\textrm{diag}(\mathcal{B})\mathbb{P}^{T}\mathbf{\bar I}         -\textrm{diag}(\nu+\mu)\mathbf{\bar E}\\
\mathbf{0}=\textrm{diag}(\nu)\mathbf{\bar E}- \textrm{diag}(\gamma+\mu+\delta)\mathbf{\bar I} \\
\mathbf{0}=\textrm{diag}(\gamma)\mathbf{\bar I}- \textrm{diag}(\eta+\mu)\mathbf{\bar R} 
\end{array}\right.
\end{equation}
We can easily see that $\mathbf{\bar R} =A\mathbf{\bar E}$ and $\mathbf{\bar I}=L\mathbf{\bar E}$, where $A$, $L$ and $K$ are as defined in (\ref{ALK}). Thus, $\mathbf{\bar I}\gg0$ implies that $\mathbf{\bar E}\gg0$ and $\mathbf{\bar R}\gg0$. \\

\n Hence, System (\ref{EquilibriumRelations}) could be written only in terms of $\mathbf{\bar S}$ and $\mathbf{\bar E}$, that is:
\begin{equation}\label{SE}
\left\{
\begin{array}{l}
\mathbf{\bar S}=\diag^{-1}(\mu)\,\Big(\Lambda-\diag(\mathbf{\bar S})\mathbb M\diag(\mathcal{B})\mathbb{P}^{T}\,L\mathbf{\bar E}+\diag(\eta)\,A\mathbf{\bar E}\Big)\\
\mathbf{\bar E}=\diag^{-1}(\nu+\mu)\,\diag(\mathbf{\bar S})\mathbb M\diag(\mathcal{B})\mathbb{P}^{T}\,L\,\mathbf{\bar E}
    \end{array}
\right.
\end{equation}

Let $x=\diag^{-1}(\mu)\Lambda-\mathbf{\bar S}$ and $y=\mathbf{\bar E}$. Since $\mathbf{\bar S}\in\Omega$, it is clear that $x\geq\mathbf{0}$ and $y\geq\mathbf{0}$. Expressing the system (\ref{SE}) into new variables, we obtain:

    \begin{subequations}\label{eq:XY}

  \begin{align}[left = \empheqlbrace\,]  
    & x=\diag^{-1}(\mu)\,f(x,y) -\diag^{-1}(\mu)\,\diag(\eta)\,A\,y \label{eq:XYa}\\
     &y=\diag^{-1}(\nu+\mu)\,f(x,y)    \label{eq:XYb}
 \end{align}
    \end{subequations}

where \[f(x,y)=\diag(S^*-x)\mathbb M\diag(\mathcal{B})\mathbb{P}^{T}\,Ly
\]
It follows from (\ref{eq:XYb}) that $f(x,y)=\diag(\nu+\mu)\,y$, and hence (\ref{eq:XYa}) implies that $x=Ky$ where $$K=\diag^{-1}(\mu)\diag(\nu+\mu)-\diag^{-1}(\mu)\,\diag(\eta)\,A$$

After some algebraic manipulations, it could be shown that $K>0$. Combining the fact that $x=Ky$ and 
(\ref{eq:XYb}), it follows that  (\ref{eq:XY}), and subsequently (\ref{EquilibriumRelations}), could be written in the single vectorial equation:

$$y=g(y)$$ where 
\begin{eqnarray}
g(y)&=& \diag^{-1}(\nu+\mu)\,f(Ky,y)\nonumber\\
&=&\diag^{-1}(\nu+\mu)\,\diag(S^*-Ky)\mathbb M\diag(\mathcal{B})\mathbb{P}^{T}\,Ly
\end{eqnarray}
Thus, Model (\ref{PatchGenMatrixBil}) has a unique endemic equilibrium $\mathbf{\bar I}\gg0$ if and only if $g(y)$ has a unique fixed point $\bar y\gg0$. The desired result is achieved.
\end{proof}

Next, we present another lemma whose proof is straightforward:
\begin{lemma}\label{GFrelationship}
The function $g(y)$ has a fixed point $\bar y$ if and only if $\bar y$ is an equilibrium of $\dot{y}=F(y)$ where
$$F(y)=\diag(\nu+\mu)g(y)-\diag(\nu+\mu)y$$
\end{lemma}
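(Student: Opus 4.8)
The plan is to observe that this statement reduces to a single algebraic manipulation whose entire content is factoring out an invertible diagonal matrix. First I would unwind the definition: the point $\bar y$ is an equilibrium of $\dot y = F(y)$ precisely when $F(\bar y) = \mathbf{0}$. Substituting the definition of $F$ and factoring gives
$$F(\bar y) = \diag(\nu+\mu)\,g(\bar y) - \diag(\nu+\mu)\,\bar y = \diag(\nu+\mu)\bigl(g(\bar y) - \bar y\bigr).$$

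The key step is to note that $\diag(\nu+\mu)$ is invertible: its diagonal entries are the $\nu_i + \mu_i$, which are strictly positive since the $\mu_i$ are positive natural death rates and the $\nu_i$ are nonnegative. Its inverse is exactly the matrix $\diag^{-1}(\nu+\mu)$ already in use throughout the paper, and left-multiplication by an invertible matrix preserves the zero set. Hence $F(\bar y) = \mathbf{0}$ holds if and only if $g(\bar y) - \bar y = \mathbf{0}$, that is, if and only if $g(\bar y) = \bar y$ — which is precisely the statement that $\bar y$ is a fixed point of $g$. This establishes both directions of the biconditional at once.

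I do not expect any real obstacle here; the only thing that must actually be verified is the invertibility of $\diag(\nu+\mu)$, which is immediate from the positivity of the parameters. The purpose of the lemma is not analytic but structural: it recasts the fixed-point problem isolated in Lemma \ref{GEE} as an equilibrium problem for the auxiliary system $\dot y = F(y)$, so that dynamical-systems and monotonicity arguments can subsequently be invoked to settle uniqueness of the endemic equilibrium. The proof itself therefore carries no difficulty beyond this one observation.
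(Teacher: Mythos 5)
Your proof is correct and matches the paper's (implicit) reasoning: the paper simply declares the proof straightforward and omits it, and the content is exactly what you identify, namely that $F(\bar y)=\diag(\nu+\mu)\bigl(g(\bar y)-\bar y\bigr)$ with $\diag(\nu+\mu)$ invertible because each $\nu_i+\mu_i>0$. Nothing further is needed.
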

The proof of this lemma is straightforward.
\begin{theorem}\label{UniquenessEEBil}\hfill

Under the assumption that the host-patch mobility configuration $\mathbb M\mathbb P^T$ is irreducible, Model  (\ref{PatchGenMatrixBil}) has a unique endemic equilibrium whenever $\mathcal R_0^{\textrm{DD}}(u,v)>1$.
\end{theorem}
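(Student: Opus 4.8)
The plan is to use Lemmas \ref{GEE} and \ref{GFrelationship} to reduce the problem to the finite-dimensional auxiliary field $F$, and then to recognize $\dot y = F(y)$ as a cooperative, strongly sublinear system whose threshold is $\mathcal R_0^{\textrm{DD}}(u,v)$. By Lemma \ref{GEE} a positive endemic equilibrium of (\ref{PatchGenMatrixBil}) corresponds bijectively to a positive fixed point $\bar y \gg 0$ of $g$, and by Lemma \ref{GFrelationship} the positive fixed points of $g$ are exactly the positive equilibria of $\dot y = F(y)$, where $F(y) = \diag(\nu+\mu)g(y) - \diag(\nu+\mu)y = \diag(S^*-Ky)\mathbb M\diag(\mathcal B)\mathbb P^T Ly - \diag(\nu+\mu)y$. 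Hence it suffices to show that $F$ has exactly one positive zero when $\mathcal R_0^{\textrm{DD}}(u,v) > 1$.

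First I would pin down the sign structure. Inspecting (\ref{ALK}) shows that $A$, $L$, and therefore $K$, are \emph{diagonal}, and the elementary bounds $\eta_i/(\eta_i+\mu_i) < 1$ and $\gamma_i/(\gamma_i+\mu_i+\delta_i) < 1$ give $K \gg 0$ on its diagonal, as announced after (\ref{ALK}). Writing $M' = \mathbb M\diag(\mathcal B)\mathbb P^T \geq 0$ and using that $K, L$ are diagonal, one computes for $i \neq j$ that $\partial F_i/\partial y_j = (S_i^*-K_{ii}y_i)(M'L)_{ij}$, which is $\geq 0$ on the biologically relevant region $\mathcal D = \{y \geq 0 : Ky \leq S^*\}$ since there $S_i^*-K_{ii}y_i = \bar S_i \geq 0$. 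Thus $F$ is cooperative on $\mathcal D$; moreover, irreducibility of $\mathbb M\mathbb P^T$ (together with $\mathcal B \gg 0$) passes to $M'L$, so the system is irreducible. A short scaling computation gives, for $\lambda \in (0,1)$ and $y \gg 0$, $F(\lambda y) - \lambda F(y) = \lambda(1-\lambda)\diag(Ky)M'Ly > 0$, so $F$ is strongly sublinear.

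Next I would linearize at the origin and identify the threshold. One has $DF(0) = \diag(\nu+\mu)(Dg(0) - I)$ with $Dg(0) = \diag^{-1}(\nu+\mu)\diag(S^*)M'L$; cycling the factors inside $\rho(\cdot)$ and using $L = \diag^{-1}(\gamma+\mu+\delta)\diag(\nu)$ yields $\rho(Dg(0)) = \mathcal R_0^{\textrm{DD}}(u,v)$. Since $DF(0) = B - \diag(\nu+\mu)$ with $B = \diag(\nu+\mu)Dg(0) \geq 0$ is an irreducible Metzler matrix, the standard equivalence for Metzler matrices gives that the stability modulus $s(DF(0))$ (the largest real part of its eigenvalues) is positive iff $\rho(\diag^{-1}(\nu+\mu)B) = \rho(Dg(0)) > 1$, i.e. precisely when $\mathcal R_0^{\textrm{DD}}(u,v) > 1$. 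I would then invoke the dichotomy for cooperative, irreducible, strongly sublinear systems with bounded forward orbits: such a system has at most one positive equilibrium, and it has exactly one (attracting on $\mathcal D\setminus\{0\}$) if and only if $s(DF(0)) > 0$. This delivers the unique endemic equilibrium when $\mathcal R_0^{\textrm{DD}}(u,v) > 1$.

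The main obstacle is the cooperativity check: the factor $\diag(S^*-Ky)$ could destroy the sign pattern of the Jacobian, and it is only the facts that $K$ and $L$ are diagonal and that trajectories are confined to the invariant set $\mathcal D$ (on which $S^*-Ky = \bar S \geq 0$) that rescue the monotone structure. I would therefore take care to verify that $\mathcal D$ — equivalently an order interval $[0,b]$ with $b \gg 0$ chosen so that $g(b) \leq b$ — is positively invariant and mapped into itself by $g$, so that the monotone--sublinear machinery genuinely applies; the sublinearity estimate and the spectral identification are then routine.
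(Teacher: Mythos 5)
Your proposal is correct and follows essentially the same route as the paper: both reduce, via Lemmas \ref{GEE} and \ref{GFrelationship}, to the auxiliary cooperative system $\dot y=F(y)$, verify that its Jacobian is Metzler and irreducible (from irreducibility of $\mathbb M\mathbb P^T$), identify $\rho(Dg(0))=\mathcal R_0^{\textrm{DD}}(u,v)$ so that the origin is unstable precisely when $\mathcal R_0^{\textrm{DD}}(u,v)>1$, and conclude existence, uniqueness and attractivity of the positive equilibrium from monotone dynamical systems theory. The only cosmetic difference is that the paper packages the concavity hypothesis as antimonotonicity of $F'$ and invokes Hirsch's Theorem 6.1, whereas you verify the equivalent strong sublinearity of $F$ directly (and are, if anything, more explicit than the paper about restricting to the invariant region where $S^*-Ky\geq 0$ so that cooperativity actually holds).
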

\begin{proof}\hfill

\n By using Lemma \ref{GEE} and Lemma \ref{GFrelationship}, the uniqueness of EE for Model  (\ref{PatchGenMatrixBil}) is equivalent to the uniqueness of an EE for this system 
\begin{equation}\label{AuxSys}\dot y=F(y)\end{equation} when $\mathcal R_0^{\textrm{DD}}(u,v)>1$. Therefore, we will prove that the auxiliary system (\ref{AuxSys}) has an unique EE. In fact, we will prove that this equilibrium is globally attractive if $\mathcal R_0^{\textrm{DD}}(u,v)>1$. The proof of the latter is based on Hirsch's theorem \cite{Hirsch84}, by using elements of monotone systems. The Jacobian of the vector field $F(y)$ is:
\[\begin{array}{l}
F'(y)=\diag(\nu+\mu)\,(g'(y)- \mathbb{I}) \\
=
\left(-\diag\Big(\mathbb M\diag(\mathcal{B})\,\mathbb{P}^{T}\,Ly\Big)\,K
+\diag(S^*-Ky)\mathbb M\diag(\mathcal{B})\,\mathbb{P}^{T}\,L\right)
-\diag(\nu+\mu)\,\mathbb{I}
 \\
=-\diag(\nu+\mu)\,    \mathbb{I}-
\diag\Big(\mathbb M\diag(\mathcal{B})\,\mathbb{P}^{T}\,Ly\Big)\,K
+\diag(S^*-Ky)\mathbb M\diag(\mathcal{B})\,\mathbb{P}^{T}\,L.
\end{array}\]
where $\mathbb{I}$ is the identity matrix. The matrix $-\diag(\nu+\mu)\, \mathbb{I}-
\diag\Big(\mathbb M\diag(\mathcal{B})\,\mathbb{P}^{T}\,Ly\Big)\,K$ is a diagonal matrix and $\diag(S^*-Ky)\mathbb M\diag(\mathcal{B})\,\mathbb{P}^{T}\,L$ is a nonnegative matrix (since $S^*-Ky= \mathbf{\bar S}$). It follows that $F'(y)$ is Metzler and is irreducible since
$\mathbb M\,\mathbb{P}^{T}$ is. Therefore, System (\ref{AuxSys}) is strongly monotone. Moreover, it is clear that the map 
$F' : \R^{u}\longrightarrow \R^{u}\times\R^{u}$ is antimonotone.
Also, $F(0_{\R^u})=0_{\R^u}$ and $F'(0_{\R^u})=g'(0_{\R^u})-\mathbb{I}=\diag(S^*)\mathbb M\diag(\mathcal{B})\,\mathbb{P}^{T}\,L -\mathbb{I}$. Since $\rho(g'(0_{\R^u}))=\rho(\diag(S^*)\mathbb M\diag(\mathcal{B})\,\mathbb{P}^{T}\,L)=\mathcal R_0^{\textrm{DD}}(u,v) >1$, we deduce that $F'(0_{\R^u})$ has at least a positive eigenvalue
and therefore $0_{\R^u}$ is unstable. Therefore,   
System (\ref{AuxSys}) has unique equilibrium $\bar{y}\gg 0_{\R^u}$, which is globally attractive, due to Hirsch's theorem \cite{Hirsch84} (Theorem 6.1). We conclude that Model (\ref{PatchGenMatrixBil}) has a unique endemic equilibrium whenever $\mathcal R_0^{\textrm{DD}}(u,v) >1$. 
\end{proof}
Note that with the choice of $\mathbb P=\mathbb M\textrm{diag}^{-1}(\mathbb M^{T}\bar{\mathbf N})$ and $\delta=0$, System (\ref{PatchGenMatrixBil}) is exactly System (\ref{PatchGenMatrixSameResTimes}). 
Therefore, their solutions have the same asymptotic behavior.
\section{Conclusion and discussions}\label{sec:ConclusionDiscussions}

Heterogeneity in space and social groups are often studied separately and sometimes interchangeably in the context of disease dynamics.  Moreover, in these settings, susceptibility of the infection is based on group or individual. In this paper, we propose a new framework that incorporates heterogeneity in space and in group for which the structure of the latter is independent from that of the former. We define patch as a location where the infection takes place, which has a particular risk of infection. This risk is tied to environmental or hygienic or economic conditions that favors the infection. The likelihood of infection in each patch depends on both the risk of the patch and the proportion of time each host spend in that environment. We argue that this patch-specific risk is easier to assess compared to the classical differential susceptibility or WAIFW matrices.  Human host is structured in groups, where a \textit{group} is defined as a collection of individuals with similar demographic, genetic or social characteristics. In this framework, the population of each patch at time $t$ is captured by the temporal mobility patterns of all host groups visiting the patches, which in turn depends on the host's epidemiological status.   \\

Under this framework, we propose a general SEIRS multi-patch and multi-group model with differential state-host mobility patterns. We compute the basic reproduction number of the system with $u$ groups and $v$ patches, $\mathcal R_0(u,v)$, which depends on the mobility matrices of susceptible, $\mathbb M$, and infected, $\mathbb P$. The disease persists when $\mathcal R_0>1$ and dies out from all patches when $\mathcal R_0(u,v)<1$ (Fig. \ref{DynamicsI1I2}), when $\mathbb M\mathbb P^T$ is irreducible. When this matrix is not irreducible, the disease will persist or die out in all patches of the subsystem for which the configuration group-patch is irreducible and will be decoupled from the remaining system. \\

We systematically investigate the effects of heterogeneity in mobility patterns, groups and patches on the basic reproduction and on disease prevalence. Indeed, we have shown that, if the epidemiological parameters are fixed, the basic reproduction number is an increasing function of the entries of infected  hosts' movement matrix (e.g. Theorem \ref{TheoremInegaR0P}). Also, if the number of groups is fixed, an increase in the number of patches  increases the basic reproduction number (e.g. see Theorem \ref{R0Bounds}).  Explicit expressions of the basic reproduction numbers are obtained when the mobility matrices $\mathbb M$ and $\mathbb P$ are of rank one. That is, when, for all groups, susceptible (and infected) individuals' residence times in all patches are proportional (Theorems \ref{Mrank1} and \ref{Prank1}). It turns out that if the susceptible residence time matrix is of rank one and stochastic, the basic reproduction number is independent of $\mathbb M$. Moreover, we also show that if $\mathbb M$ is of rank one, its stochasticity is sufficient but not necessary for the basic reproduction number to be independent of $\mathbb M$. However, if the infected residence time matrix $\mathbb P$ is of rank one, stochastic or otherwise, the basic reproduction number still depends on the infected movement patterns.\\

The patch-specific risk vector $\mathcal B$ could also depend on the effective population size. We explored the case when this dependence is linear, that is when, for each patch $j$, $\beta_j(N_j^{\textrm{eff}})=\beta_j\sum_{k=1}^u(m_{kj}S_k+n_{kj}E_k+p_{kj}I_k+q_{kj}R_k)$. In this case, the transmission term of our model is captured by a density dependent incidence. Moreover, we show that this case is isomorphic to the general model, where the mobility patterns of host does not dependent on the epidemiological class, that is when $\mathbb M=\mathbb N=\mathbb P=\mathbb Q$. We prove that, in this case the disease free equilibrium is globally asymptotically stable whenever $\mathcal R_0\leq1$ while an unique endemic equilibrium exists if $\mathcal R_0>1$.  \\

\n We suspect that the disease free equilibrium is globally asymptotically stable whenever $\mathcal R_0\leq1$ for Model (\ref{PatchGenMatrix}), where the patch-specific risk is constant. A similar remark holds for the global stability of the endemic equilibrium of 
Model~(\ref{PatchGenMatrixBil}) and Model~(\ref{PatchGenMatrix}) when 
$\mathcal R_0>1$. This is still under investigation. Further areas of extensions of this study include more general forms of the patch-specific risks and when mobility patterns reflect the choices that individuals make at each point in time. These choices are based on maximizing the discounted value of an economic criterion \`a la \cite{EliCCC2011,Perrings:2014yq}.

      \subsection*{Acknowledgements}
{\small{We are grateful to two anonymous referees and the handling editor Dr. Gabriela Gomes for valuable comments and suggestions that led to an
improvement of this paper. We also thank  Bridget K. Druken for the careful reading and constructive comments. 
A. Iggidr acknowledges the partial support of Inria  in the framework of the program STIC AmSud (project MOSTICAW).
 }}

\end{document}